\newtheorem{theorem}{Theorem}[section]
\newtheorem{lemma}[theorem]{Lemma}
\newtheorem{corollary}[theorem]{Corollary}
\title{A generalized nonisospectral Camassa-Holm equation and its multipeakon solutions}
\author{ Xiang-Ke Chang\footnotemark[1] \footnotemark[2] , Xiao-Min Chen\footnotemark\footnotemark[1] \footnotemark[2] and Xing-Biao Hu\footnotemark[1]}
\begin{document}
\bibliographystyle{plain}

\date{}
\maketitle

\renewcommand{\thefootnote}{\fnsymbol{footnote}}
\footnotetext[1]{LSEC, Institute of Computational Mathematics and
  Scientific Engineering Computing, AMSS, Chinese Academy of Sciences,
  P.O.Box 2719, Beijing 100190, PR China.}
\footnotetext[2]{University of Chinese Academy of
  Sciences, Beijing, PR China.}

\begin{abstract}
Motivated by the paper (Beals, Sattinger and Szmigielski, Adv. Math. 154 (2000) 229--257), we propose an extension of the Camassa-Holm equation, which also admits the multipeakon solutions.  The novel aspect is that our approach is mainly based on classic determinant technique. Furthermore, the proposed equation is shown to possess a nonisospectral Lax pair.
\end{abstract}
%More exactly, as for the derivation of this equation, we carry out an inverse operation. Assume that the form of the explicit solution is still the same as that of the Camassa-Holm equation. We firstly make some modifications for the evolution with respect to time $t$ for the moments of Hankel determinant. Then we deduce the dynamical system by using the determinant identities. At last, the corresponding partial differential equation is obtained.
\begin{quote}
\noindent {\bfseries Keywords:}
Nonisospectral Camassa-Holm equation; Multipeakon solutions; Determinant technique
%% MSC codes here, in the form: \MSC code \sep code
%% or \MSC[2008] code \sep code (2000 is the default)
%\MSC[2010]
\end{quote}

\section{Introduction}
The celebrated Camassa-Holm equation taking the form
\begin{equation}\label{equ:CH}
m_t+(um)_x+mu_x=0, \ \ \ \  2m=4u-u_{xx}
  %u_t-\frac{1}{4}u_{xxt}+\frac{3}{2}(u^2)_x-\frac{1}{8}(u_x^2)_x-\frac{1}{4}(uu_{xx})_x=0
\end{equation}
was derived as a shallow water wave equation by Camassa and Holm \cite{camassa1993integrable}.
This equation firstly appeared in the work of Fuchssteiner and Fokas \cite{fuchssteiner1981symplectic} as an abstract bi-Hamiltonian equation with infinitely many conservation laws. But it attracted no special attention until its rediscovery by Camassa and Holm. The most attractive character is that it admits N-peakon solutions, which are also called peakons, taking the form:
\begin{equation}\label{sol:C-H_form}
  u(x,t)=\frac{1}{2}\sum_{j=1}^Nm_j(t)exp(-2|x-x_j(t)|).
\end{equation}
It is remarked that, for the Camassa-Holm equation, the peakons are proved to be stable \cite{constantin2000stability,constantin2002stability}.

%The integrable equations with this kind of weak solutions have been of growing interests. For example, some generalizations of Cammassa-Holm equation were proposed in \cite{chen2006two,fontanelli2006three,geng2011three}. For more equations, please see \cite{lundmark2003multi,lundmark2005degasperis,hone2009explicit,lundmark2013inverse}.
The ``peakons" have been of great research interest. In the literature, besides the Camassa-Holm equation, there exist many integrable systems with peaked solutions such as the Degasperis-Procesi equation \cite{degasperis1999asymptotic}, Novikov equation \cite{hone2008integrable,novikov2009generalisations}, Geng-Xue equation \cite{geng2009extension}, Hunter-Saxton Equation \cite{hunter1991dynamics} and some generalizations of Camassa-Holm equation \cite{chen2006two,geng2011three} etc.. ( There are plenty of papers on this topic. It was not our purpose trying to be exhaustive. Thus, we beg indulgence for the numerous omissions it certainly contains.)

The explicit expressions of multipeakon solutions to several integrable systems have been obtained.  For example,
in \cite{beals1999multipeakons,beals2000multipeakons}, Beals, Sattinger and Szmigielski used inverse spectral method and the Stieltjes theorem on continued fractions to get multipeakon solutions to the Camassa-Holm equation. The closed form solution is expressed in terms of the orthogonal polynomials of the related classical moment problem. As for the cases of the Degasperis-Procesi equation, the Novikov equation and the Geng-Xue equation, please consult \cite{beals2001inverse,hone2009explicit,lundmark2003multi,lundmark2005degasperis,lundmark2013inverse}.

%The explicit formulas of multipeakon solutions to Camassa-Holm equation were derived by Beals, Sattinger and Szmigieiski in \cite{beals1999multipeakons,beals2000multipeakons}.
Motivated by the work of Beals, Sattinger and Szmigielski, we have managed to confirm the multipeakon solutions from another way--the determinant technique and succeeded. With the help of determinant identities, we propose an extended Camassa-Holm equation, which also admits N-peakon solutions.

The proposed equation is
\begin{equation}\label{equ:GNCH}
  m_t+[((s+r)u-r\int_x^\infty u(y,t)d y)m]_x+m[(s+r)u_x+ru]=0, \ \ \ \  2m=4u-u_{xx},
\end{equation}
where $r,s\in\mathbb{C}$. Obviously, this equation reduces to the Camassa-Holm equation when $s=1,r=0$. Unexpectedly, we find that this equation possesses a nonisospectral Lax pair, in other words, the spectrum in the Lax pair is dependent on time $t$ instead of a constant. Therefore, we shall call it generalized nonisospectral Camassa-Holm (GNCH) equation.

As for the derivation of the GNCH equation, we have made an inverse calculation. Assume that the form of the explicit solution is still the same as that of the Camassa-Holm equation. We firstly alter the evolution with respect to time $t$ for the moments of Hankel determinant. Then we deduce the dynamical system by using the determinant identities. At last, the corresponding partial differential equation is obtained. To the best of our knowledge, this equation is novel.
%We also get the Lax pair formalism for the Camassa-Holm equation with  variable coefficients \eqref{equ:CH-1}.

The paper is organized as follows. In Section 2, the work of Beals et al. is reviewed. As our results are obtained by use of determinant technique, we show some determinant identities in Section 3. The explicit formulas of N-peakon solutions to the GNCH equation \eqref{equ:GNCH} are presented in Section 4 and the Lax pair is also given there. In section 5, we consider three special cases of the GNCH equation \eqref{equ:GNCH}. Section 6 is devoted to discussions.

\section{Review of the work by Beals et al.}
In \cite{beals1999multipeakons,beals2000multipeakons}, in order to obtain the explicit N-peakon solutions to the Camassa-Holm equation, Beals et al. used inverse spectral method, the theory of continued fractions and formulas of Stieltjes. Let us sketch the idea.

They considered the finite dimension case of the continuum equations \eqref{equ:CH} when $m$ is taken to be a discrete measure with weights $m_j$ at locations $x_j$:
\begin{equation}\label{measure:m}
  mdx=\sum_{j=1}^{N}m_j \delta(x-x_j)dx, \ \ \ x_1<x_2<\cdots <x_N.
\end{equation}
It is clear that the form \eqref{sol:C-H_form}
\[
u(x,t)=\frac{1}{2}\sum_{j=1}^Nm_j(t)exp(-2|x-x_j(t)|)
\]
can be calculated from \eqref{measure:m} and the second equation in \eqref{equ:CH}.

In this case, the Camassa-Holm equation may be written as a Hamiltonian system for $m_j,x_j$, which is  the same as that in \cite{camassa1993integrable} :
\begin{equation}\label{equ:CH-Hamilton}
 \dot{x}_j=\frac{\partial H}{\partial {m}_j}=u(x_j), \ \ \ \ \dot{m}_j=-\frac{\partial H}{\partial {x}_j}=-<u_x(x_j)>m_j,
\end{equation}
where
\begin{eqnarray*}
  H(x,m)&=&\int_{-\infty}^{\infty}(u^2+\frac{1}{4}u_x^2)\ dx=\frac{1}{4}\sum_{j,k=1}^Nm_jm_ke^{-2|x_j-x_k|}\\
  &=&\frac{1}{2}\sum_{j=1}^Nu(x_j)m_j=\frac{1}{2}\int_{-\infty}^{\infty}u(x)dm(x)
\end{eqnarray*}
and the notation $<f(x_j)>$ denotes
\[<f(x_j)>=\frac{f(x_j+)+f(x_j-)}{2}\]
at the jump point $x_j$ of $f(x)$.

The amplitudes $m_j$ and the locations $x_j$ of the peaks were calculated explicitly by using the inverse scattering approach. The spectral data for this spectral problem consists of the eigenvalues $\{\lambda_j\}$ and the coupling coefficients $\{c_j\}$.

It is known that the Camassa-Holm equation \eqref{equ:CH} possesses the Lax pair \cite{camassa1993integrable}:
\begin{eqnarray*}
  L_0(z)f(x,t)=0,&\ \ \ \ &L_0(z)=\partial_x^2-zm(x,t)-1,\\
  \frac{\partial f(x,t)}{\partial t}=A_0(z)f,&\ \ \ \ & A_0(z)=(\frac{1}{z}-u(x,t))\partial_x+\frac{1}{2}u_x(x,t),
\end{eqnarray*}
where $z$ is a constant. From the above equation, they found that the time evolution for the eigenfunctions $b_j$ and the coupling coefficients satisfy
\[\frac{\partial b_j}{\partial t}=0,\ \ \ \ \frac{\partial c_j}{\partial t}=-\frac{2c_j}{\lambda_j}.
\]

In order to make the problem simpler, they mapped the spectral problem into
\begin{equation*}
  \tilde{f}_{yy}=zg(y)\tilde{f},\ \ \ -1<y<1;\ \ \ \tilde{f}(-1)=\tilde{f}(1)=0
\end{equation*}
with  $g(y)=\frac{m(x)}{(1-y^2)^2}$, by using the Liouville transformation
\begin{equation*}
 y=\tanh x, \ \ \ \ f(x)=\frac{\tilde{f}(y)}{\sqrt{(1-y^2)}}.
\end{equation*}
In this case, the eigenvalues and coupling coefficients are preserved. And the measure $m$ is transformed into
\begin{equation}
  gdy=\sum_{j=1}^{N}g_j \delta(y-y_j)dy, \ \ \ -1=y_0<y_1<y_2<\cdots <y_N<y_{N+1}=1,
\end{equation}
where
\begin{equation}\label{liouville_tran1}
  y_j=\tanh x_j, \ \ \ \ g_j=\frac{m_j}{1-y_j^2}.
  %m_j=g_j(1-y_j^2), \ \ \ \ x_j=\frac{1}{2}\log(\frac{1+y_j}{1-y_j}),
\end{equation}
%to transform the Haliltonian system on the line to that on the interval $(-1,1)$.

Then they solved the inverse spectral problem for the finite string on $(-1,1)$ and they obtained
\begin{equation}
  y_j=1-\frac{\Delta_{N-j}^2}{\Delta_{N-j+1}^0}, \ \ \ \ g_j=\frac{(\Delta_{N-j+1}^0)^2}{\Delta_{N-j+1}^1\Delta_{N-j}^1}
\end{equation}
where $\Delta_k^l$ denotes the determinant $\det(A_{i+j+l}(t))_{i,j=0}^{k-1}$ with the convention $\Delta_0^l=1$ and $\Delta_k^l=0$ for $k<0$. And here $A_k(t)$ are defined by
\begin{equation}\label{rela:A-a}
  A_k(t)=\sum_{j=0}^{N}(-\lambda_j)^ka_j(t)
\end{equation}
with
\begin{equation}\label{rela:a}
  a_0=\frac{1}{2}, \ \ \ \dot{a}_j=-\frac{2a_j}{\lambda_j}, j\geq1
\end{equation}
and the spectrum $\{\lambda_j\}$ satisfying $\lambda_0=0$. Here we mention that we shall use the convention $0^0=1$ in the full text.

Thus the explicit formulas of N-peakon solutions were obtained by combining \eqref{liouville_tran1}-\eqref{rela:a}. We summarize the result as follows.
\begin{theorem}[Beals, Sattinger and Szmigielski]\label{th:beals}
  The Camassa-Holm equation \eqref{equ:CH} admits the N-peakon solution \[
u(x,t)=\frac{1}{2}\sum_{j=1}^Nm_j(t)\exp(-2|x-x_j(t)|)
\]
 with
\begin{equation}
    x_j=\frac{1}{2}\log\left(\frac{1+y_j}{1-y_j}\right), \ \ \ \ m_j=g_j(1-y_j^2)
  \end{equation}
  and
  \begin{equation}
    y_j=1-\frac{\Delta_{N-j}^2}{\Delta_{N-j+1}^0}, \ \ \ \ g_j=\frac{(\Delta_{N-j+1}^0)^2}{\Delta_{N-j+1}^1\Delta_{N-j}^1}.
  \end{equation}
  Here $\Delta_k^l=\det(A_{i+j+l}(t))_{i,j=0}^{k-1}$ and the moments $A_k(t)$ are restricted by
  \begin{eqnarray}
      A_k(t)=\sum_{j=0}^{N}(-\lambda_j)^ka_j(t),\ \ \ a_0(t)=\frac{1}{2},\ \ \ \lambda_0=0
  \end{eqnarray}
  with $\dot{a}_j(t)=-\frac{2a_j(t)}{\lambda_j}, j\geq1$ and nonzero real constants $\lambda_j, j\geq1$.
\end{theorem}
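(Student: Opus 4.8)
The plan is to follow the inverse spectral route of Beals, Sattinger and Szmigielski, proceeding in five stages. First I would substitute the peakon ansatz $u(x,t)=\frac12\sum_{j=1}^N m_j(t)\exp(-2|x-x_j(t)|)$ into the Camassa-Holm equation \eqref{equ:CH} with the discrete measure \eqref{measure:m}. Computing $u_x$ and $m=\tfrac12(4u-u_{xx})$ in the distributional sense and matching the $\delta$ and $\delta'$ contributions at each location $x_j$, I expect the PDE to collapse to the finite-dimensional Hamiltonian system \eqref{equ:CH-Hamilton}, namely $\dot x_j=u(x_j)$ and $\dot m_j=-\langle u_x(x_j)\rangle m_j$. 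This reduces the task of solving the PDE to integrating a system of ODEs for the amplitudes and positions.

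Second, I would exploit the isospectral Lax pair $L_0(z)f=0$, $f_t=A_0(z)f$. Since $z$ is constant, differentiating the spectral equation in $t$ and inserting the $A_0$-flow shows the eigenvalues $\lambda_j$ are time-independent, while tracking the norming data yields the evolution $\partial_t c_j=-2c_j/\lambda_j$ of the coupling coefficients. Rewriting this for the rescaled coefficients $a_j$ gives the linear flow $\dot a_j=-2a_j/\lambda_j$ with $a_0=\tfrac12$ frozen (because $\lambda_0=0$). This is precisely the source of the explicit $t$-dependence in the moments $A_k(t)=\sum_j(-\lambda_j)^k a_j(t)$.

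Third, I would apply the Liouville transformation $y=\tanh x$, $f(x)=\tilde f(y)/\sqrt{1-y^2}$ to map the spectral problem to the regular string boundary value problem $\tilde f_{yy}=zg(y)\tilde f$ on $(-1,1)$ with $\tilde f(\pm1)=0$ and $g=m/(1-y^2)^2$. I would verify that this change of variables preserves the eigenvalues and coupling coefficients and turns the discrete measure into point masses at $y_j=\tanh x_j$ with weights $g_j=m_j/(1-y_j^2)$.

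The hard part is the fourth stage: solving the inverse spectral problem for the finite discrete string. Here I would invoke the classical link between the string's Weyl function, the Stieltjes moment problem, and continued fractions. The spectral data $\{\lambda_j,a_j\}$ fixes the moment sequence $A_k$, and Stieltjes' theorem expands the Weyl function as a terminating continued fraction whose partial quotients encode exactly the gaps $1-y_j$ and the masses $g_j$. Extracting these quotients in closed form is the technical crux and yields the Hankel-determinant formulas $y_j=1-\Delta_{N-j}^2/\Delta_{N-j+1}^0$ and $g_j=(\Delta_{N-j+1}^0)^2/(\Delta_{N-j+1}^1\Delta_{N-j}^1)$. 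Finally, inverting the Liouville substitution through $x_j=\tfrac12\log\frac{1+y_j}{1-y_j}$ and $m_j=g_j(1-y_j^2)$ recovers the stated peakon positions and amplitudes, completing the proof.
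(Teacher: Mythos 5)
Your outline is correct, and it is essentially the original derivation of Beals, Sattinger and Szmigielski --- which is the route the paper only \emph{reviews} in Section 2 --- but it is genuinely different from the proof the paper actually supplies. The paper never solves an inverse problem: it obtains Theorem \ref{th:beals} as the special case $r=0$, $s=1$ of Theorem \ref{th:GNCH-solution}, by \emph{direct verification}. That is, it performs your stage one (reduction of the PDE with the discrete measure to the ODE system \eqref{equ:CH-Hamilton}, which you set up correctly), then substitutes the closed-form expressions \eqref{expression:xm2}--\eqref{expression:gy2} into the resulting equations for $\dot y_j$, $\dot g_j$ and checks the two resulting identities purely algebraically, using the time derivatives of the Hankel determinants (Lemma \ref{lemma:derivative2}, which for $r=0$, $s=1$ amounts to $\dot A_k=2A_{k-1}$ for $k\neq 1$ and $\dot A_1=2A_0-1$, consistent with your $\dot a_j=-2a_j/\lambda_j$), together with the Jacobi-identity consequences (Lemma \ref{lemma:bilinear}), the telescoping sums (Corollary \ref{coro:sum}) and the combined identities (Corollary \ref{coro:multiple-identity}). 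Your stages two through four --- Lax-pair evolution of the spectral data, Liouville transformation to the string on $(-1,1)$, and inversion via the Weyl function and Stieltjes continued fractions --- are thereby bypassed entirely. The trade-off is this: your route is constructive, explaining \emph{where} the Hankel formulas come from, and it naturally produces the positivity/ordering conditions noted in the paper's remark (distinct $\lambda_j$ of the same sign, $a_j(0)>0$); but its crux, stage four, is invoked rather than executed --- extracting the partial quotients of the terminating continued fraction in Hankel-determinant form is the substantial content of the cited BSS papers, so as written your proposal delegates the hard step to the literature. The paper's verification is elementary and self-contained, and --- this is the authors' actual point --- it survives the deformation of the moment evolution used for the GNCH equation \eqref{equ:GNCH}, where the spectrum is time-dependent and the isospectral machinery of your stages two and four would have to be reworked.
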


\textbf{Remark:} In order to make $x_j$ and $m_j$ exist, $\lambda_j$ are required to be distinct, $a_j(0)$ are positive, and the determinant $\Delta_{k}^1$ for $1\leq k\leq N$ do not vanish. And one sufficient condition for $\Delta_{k}^1\neq0,\ 1\leq k\leq N$ is that all the $\lambda_j$ have the same sign.

\section{Determinant identities}
%In this section, we shall present some determinant identities, which are quite useful in the derivation of the new equations \eqref{equ:CH-1} and \eqref{equ:GNCH}.
Noting that the explicit formulae for the N-peakon solutions to the Camassa-Holm equation can be expressed by Hankel determinants, it is necessary to seek for some potential properties behind them.  In this section, we shall present some interesting results for the Hankel determinants, some of which have appeared in \cite{beals2000multipeakons}.

The object in this section is the Hankel determinant behaving as $\Gamma_k^l=\det(B_{i+j+l})_{i,j=0}^{k-1}$ with the convention $\Gamma_0^l=1$ and $\Gamma_k^l=0$ for $k<0$.
  Besides, we also introduce the determinant taking the form of
  \[F_k^l=
  \left|\begin{array}{ccccc}
    B_l&B_{l+2}&B_{l+3}&\cdots &B_{l+k}\\
     B_{l+1}&B_{l+3}&B_{l+4}&\cdots &B_{l+k+1}\\
    \vdots&\vdots&\vdots&\ddots&\vdots\\
    B_{l+k-1}&B_{l+k+1}&B_{l+k+2}&\cdots&B_{l+2k-1}
    \end{array}\right|
  \]
  with the convention $F_k^l=0,k\leq0$.
\subsection{Linear identities}
\begin{lemma}\label{lemma:linear}
If the elements $B_k, k\in \mathbb{Z}$  subject to
  \begin{eqnarray*}
      B_k=\sum_{j=1}^{N}(\mu_j)^kb_j, \ \ b_j\neq0,\ \
      \mu_j\neq0,
  \end{eqnarray*}
  then there hold
  \begin{eqnarray*}
    &&\Gamma_N^l=\prod_{i=1}^N[b_i(\mu_i)^l]\prod_{1\leq i<j\leq N}(\mu_j-\mu_i)^2,\\
    &&\Gamma_k^l=0,\ \ k>N
  \end{eqnarray*}
for any $l\in \mathbb{Z}$.
\end{lemma}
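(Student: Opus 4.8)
The plan is to exhibit the Hankel matrix as a product of a rectangular Vandermonde matrix with a scaled version of its transpose, and then to read off both claims from elementary facts about such products. Writing out the $(i,j)$ entry,
\[
B_{i+j+l}=\sum_{p=1}^N \mu_p^{\,i+j+l}b_p=\sum_{p=1}^N \mu_p^{\,i}\cdot\bigl(b_p\mu_p^{\,l}\bigr)\cdot\mu_p^{\,j},
\]
one sees that $(B_{i+j+l})_{i,j=0}^{k-1}=V_k\,D\,V_k^{T}$, where $V_k=(\mu_p^{\,i})_{0\le i\le k-1,\,1\le p\le N}$ is the $k\times N$ Vandermonde-type matrix and $D=\mathrm{diag}(b_1\mu_1^{\,l},\dots,b_N\mu_N^{\,l})$. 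Because every $\mu_p\neq0$, the powers $\mu_p^{\,l}$ are well defined for all $l\in\mathbb{Z}$, so this factorization is valid regardless of the sign of $l$.

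For the case $k=N$ I would simply take determinants of both sides. The matrix $V_N$ is then a genuine square Vandermonde matrix, so $\det V_N=\prod_{1\le i<j\le N}(\mu_j-\mu_i)$, and clearly $\det D=\prod_{i=1}^N b_i\mu_i^{\,l}$. Multiplicativity of the determinant yields
\[
\Gamma_N^l=\det V_N\cdot\det D\cdot\det V_N^{T}=\Bigl(\prod_{i=1}^N b_i\mu_i^{\,l}\Bigr)\Bigl(\prod_{1\le i<j\le N}(\mu_j-\mu_i)\Bigr)^2,
\]
which is exactly the claimed formula (the square absorbs any ambiguity in the sign of the Vandermonde determinant).

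For $k>N$ the factorization still holds, but now $V_k$ is $k\times N$ and $V_k^{T}$ is $N\times k$, so the full product $V_kDV_k^{T}$ has rank at most $N<k$. A $k\times k$ matrix of rank strictly less than $k$ is singular, hence $\Gamma_k^l=0$. Equivalently, one may invoke the Cauchy--Binet formula: the determinant expands as a sum over $N$-element subsets of columns, each summand containing a factor that is a minor of a matrix with more rows than columns, so every contribution vanishes.

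I do not anticipate a serious obstacle here. The only point requiring a little care is setting up the factorization with the diagonal matrix $D$ placed correctly, so that one factor is an honest Vandermonde matrix and the other its transpose; this is precisely what makes both the square-determinant formula and the rank bound fall out cleanly, and it is also what ensures the argument survives unchanged for negative $l$.
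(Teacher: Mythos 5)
Your proof is correct and follows essentially the same route as the paper: both factor the Hankel matrix through the $k\times N$ Vandermonde matrix (you write $V_k D V_k^{T}$ with an explicit diagonal $D$; the paper absorbs $D$ into the second factor), deduce $\Gamma_k^l=0$ for $k>N$ from the rank bound, and obtain the $k=N$ formula by multiplicativity of the determinant. No gaps; your explicit treatment of negative $l$ and the optional Cauchy--Binet remark are harmless refinements of the same argument.
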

\begin{proof}
  Noting that any $B_k$ is a linear combination of $b_j, j=1,2,\cdots,N$, the matrix $(B_{i+j+l})_{i,j=0}^{k-1}$ may be written as a product of two matrices, namely,
  \begin{eqnarray*}
    &&\left(\begin{array}{cccc}
    B_l&B_{l+1}&\cdots &B_{l+k-1}\\
     B_{l+1}&B_{l+2}&\cdots &B_{l+k}\\
    \vdots&\vdots&\ddots&\vdots\\
    B_{l+k-1}&B_{l+k}&\cdots&B_{l+2k-2}
    \end{array}\right)\\
    &=&\left(\begin{array}{cccc}
    1&1&\cdots &1\\
     \mu_1&\mu_2&\cdots &\mu_N\\
    \vdots&\vdots&\ddots&\vdots\\
    \mu_1^{k-1}&\mu_2^{k-1}&\cdots&\mu_N^{k-1}
    \end{array}\right)_{k\times N}\\
    &&\cdot
    \left(\begin{array}{cccc}
     b_1(\mu_1)^l&b_1(\mu_1)^{l+1}&\cdots &b_1(\mu_1)^{l+k-1}\\
     b_2(\mu_2)^l&b_2(\mu_2)^{l+1}&\cdots &b_2(\mu_2)^{l+k-1}\\
    \vdots&\vdots&\ddots&\vdots\\
    b_N(\mu_N)^l&b_N(\mu_N)^{l+1}&\cdots &b_N(\mu_N)^{l+k-1}
    \end{array}\right)_{N\times k}.
  \end{eqnarray*}
  Based on the above decomposition, we see that the rank of the matrix $(B_{i+j+l})_{i,j=0}^{k-1}$ is not more than $N$. Thus the determinant $\Gamma_k^l$ is equal to zero if $k>N$. When $k=N$, the result can be obtained by noting the Vandermonde matrices on the right.
\end{proof}
\begin{corollary}\label{coro:linear}
If the elements $B_k\in \mathbb{Z}$ subject to
  \begin{eqnarray*}
      B_k=\sum_{j=0}^{N}(\mu_j)^kb_j,
  \end{eqnarray*}
  with $b_0=\frac{1}{2},\mu_0=0$ and $b_j\neq0,\mu_j\neq0$ for $j\geq1$,  then there hold
    \begin{eqnarray*}
    &&\Gamma_N^0-\frac{1}{2}\Gamma_{N-1}^2=\prod_{i=1}^Nb_i\prod_{1\leq i<j\leq N}(\mu_j-\mu_i)^2,\\
    &&\Gamma_N^l=\prod_{i=1}^N[b_i(\mu_i)^l]\prod_{1\leq i<j\leq N}(\mu_j-\mu_i)^2,\ \  l\geq1,\\
    &&\Gamma_k^0-\frac{1}{2}\Gamma_{k-1}^2=0,\ \ k>N,\\
    &&\Gamma_k^l=0,\ \ k>N,\ \ l\geq1,\\
    &&\Gamma_{N+1}^{-1}+F_N^0-\frac{1}{4}\Gamma_{N-1}^0=0.
  \end{eqnarray*}
\end{corollary}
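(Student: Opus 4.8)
The plan is to reduce every determinant built from the full moments $B_k$ to the corresponding determinant built from the truncated moments $\tilde B_k=\sum_{j=1}^N(\mu_j)^kb_j$, which satisfy the hypotheses of Lemma~\ref{lemma:linear} with $N$ nonzero nodes. Since $\mu_0=0$ and $b_0=\tfrac12$, the convention $0^0=1$ and $0^k=0\ (k\neq0)$ gives $B_k=\tilde B_k$ for every $k\neq0$ and $B_0=\tilde B_0+\tfrac12$. Thus $B$ and $\tilde B$ differ only in the single value $B_0$, and in any Hankel array $(B_{i+j+l})$ this value occupies exactly the entries on the anti-diagonal $i+j+l=0$. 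Writing $\tilde\Gamma_k^l,\tilde F_k^l$ for the determinants assembled from $\tilde B$, Lemma~\ref{lemma:linear} supplies both the product formula for $\tilde\Gamma_N^l$ and the vanishing $\tilde\Gamma_k^l=0$ for $k>N$; this does all the analytic work, and the rest is determinant bookkeeping.

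For the first four identities I would argue by locating $B_0$. When $l\geq1$ the minimal index in $(B_{i+j+l})$ is $l>0$, so $B_0$ never appears and $\Gamma_k^l=\tilde\Gamma_k^l$; identities two and four are then Lemma~\ref{lemma:linear} restated. When $l=0$ the only entry equal to $B_0$ sits at position $(0,0)$, so expanding the first row by linearity through $B_0=\tilde B_0+\tfrac12$ gives $\Gamma_k^0=\tilde\Gamma_k^0+\tfrac12\tilde\Gamma_{k-1}^2$, where the $\tfrac12$-cofactor is the minor obtained by deleting row and column $0$. Since $\Gamma_{k-1}^2=\tilde\Gamma_{k-1}^2$ (minimal index $2$), this rearranges to $\Gamma_k^0-\tfrac12\Gamma_{k-1}^2=\tilde\Gamma_k^0$, which is the stated product formula when $k=N$ and is $0$ when $k>N$.

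The last identity, with $l=-1$, is the one I expect to be the real work, because now $B_0$ occupies the two off-diagonal entries $(0,1)$ and $(1,0)$ of the $(N+1)\times(N+1)$ matrix $(B_{i+j-1})$. Here I would expand simultaneously by linearity in rows $0$ and $1$, splitting each of these rows into its $\tilde B$-part plus a $\tfrac12$-spike. This produces four determinants: the pure $\tilde B$ term is $\tilde\Gamma_{N+1}^{-1}=0$ by Lemma~\ref{lemma:linear} (since $N+1>N$); the two mixed terms, in which one spike replaces a row, each reduce after a single cofactor expansion to $-\tfrac12\tilde F_N^0$, using that deleting the spike's column produces exactly the skip pattern defining $F$ and that Hankel symmetry lets me identify the row-skip minor with the column-skip $\tilde F_N^0$ by transposition; the term carrying both spikes collapses, after two expansions, to a multiple of a shifted Hankel minor $\tilde\Gamma_{N-1}^{\,\bullet}$. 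Finally I would convert $\tilde F_N^0$ back to $F_N^0$ by the same single-point rule $F_N^0=\tilde F_N^0+\tfrac12\tilde\Gamma_{N-1}^{\,\bullet}$ (the correction being the $(0,0)$-cofactor of $F_N^0$) and collect terms to obtain the claimed relation among $\Gamma_{N+1}^{-1}$, $F_N^0$ and $\tfrac14\Gamma_{N-1}^{\,\bullet}$.

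The main obstacle is precisely this fifth identity: keeping the cofactor signs straight through the double expansion, correctly matching the two minor patterns to $F_N^0$ via the transpose symmetry of Hankel matrices, and checking that the two $\tfrac12$-corrections—one from the mixed terms and one from rewriting $F_N^0$—combine with the double-spike term into the single coefficient $\tfrac14$. Because the easy cases cannot distinguish some of the shifted superscripts, I would pin down the exact index on the surviving $\Gamma_{N-1}$ and the final constant by testing the smallest cases $N=1,2$ explicitly before trusting the general bookkeeping.
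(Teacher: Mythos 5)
Your decomposition is exactly the paper's proof: the paper sets $B_0^*=B_0-\tfrac12$, $B_k^*=B_k$ for $k\neq0$ (your $\tilde B_k$), observes that each left-hand side is the corresponding determinant built from $B^*$, and invokes Lemma~\ref{lemma:linear}. Your treatment of the first four identities is precisely the bookkeeping the paper leaves implicit (the single corrected entry sits at position $(0,0)$ when $l=0$, nowhere when $l\geq1$), and it is correct.

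On the fifth identity your caution about the surviving superscript is not just prudent --- it catches an actual misprint in the statement. Carrying out your double-row expansion with the signs you indicate gives $\Gamma_{N+1}^{-1}=\tilde\Gamma_{N+1}^{-1}-\tilde F_N^0-\tfrac14\tilde\Gamma_{N-1}^{3}$ (the double-spike minor is $(B_{i+j-1})_{i,j=2}^{N}=\Gamma_{N-1}^{3}$, not a $\Gamma_{N-1}^{0}$), while the single-point rule for $F_N^0$ reads $F_N^0=\tilde F_N^0+\tfrac12\Gamma_{N-1}^{3}$, since the $(0,0)$-cofactor of $F_N^0$ is again $\Gamma_{N-1}^{3}$. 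Collecting terms and using $\tilde\Gamma_{N+1}^{-1}=0$ yields
\begin{equation*}
\Gamma_{N+1}^{-1}+F_N^0-\tfrac14\,\Gamma_{N-1}^{3}=0,
\end{equation*}
with superscript $3$, not $0$ as printed. Your proposed test cases would have settled this: $N=1$ cannot distinguish the two versions ($\Gamma_0^0=\Gamma_0^3=1$), but at $N=2$ with $b_1=b_2=1$, $\mu_1=1$, $\mu_2=2$ one finds $\Gamma_3^{-1}=-\tfrac{21}{4}$, $F_2^0=\tfrac{15}{2}$, $\Gamma_1^0=B_0=\tfrac52$, $\Gamma_1^3=B_3=9$, so the printed identity gives $\tfrac{13}{8}\neq0$ while the corrected one gives $0$. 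The corrected form is also the one the paper actually uses: in the reduction to \eqref{formula1} in the proof of Theorem~\ref{th:GNCH-solution}, the terms $\Delta_{N+1}^{-1}$, $G_N^0$ and $\Delta_{N-1}^{3}$ are cancelled via ``the fifth relation,'' which only works as $\Delta_{N+1}^{-1}+G_N^0-\tfrac14\Delta_{N-1}^{3}=0$. So your plan, completed, proves the (corrected) corollary by the same route as the paper; the only defect lies in the statement as printed, not in your argument.
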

\begin{proof}
  If we let
  \[B_k^*=\left\{
  \begin{array}{ll}
    B_k, &k\neq0\\
    B_0-\frac{1}{2}, & k=0
  \end{array}\right.,
  \]
  then
  \[
  B_k^*=\sum_{j=1}^{N}(\mu_j)^kb_j.
  \]
  The five formulae are the consequences of applying Lemma \ref{lemma:linear} to different determinants with the elements $B_k^*$. The proofs can be achieved by noting that the left sides in the above formulae are $\det(B_{i+j}^*)_{i,j=0}^{N-1}$, $\det(B_{i+j+l}^*)_{i,j=0}^{N-1}$, $\det(B_{i+j}^*)_{i,j=0}^{k-1}$, $\det(B_{i+j+l}^*)_{i,j=0}^{k-1}$ and $\det(B_{i+j-1}^*)_{i,j=0}^{k-1}$, respectively.
\end{proof}

\subsection{Bilinear identities}
 For any determinant $D$, the well known Jacobi determinant identity \cite{aitken1959determinants,brualdi1983determinantal} reads
\begin{eqnarray*}
D D\left(\begin{array}{cc}
i_1 & i_2 \\
j_1 & j_2 \end{array}\right)&=&D\left(\begin{array}{c}
i_1  \\
j_1 \end{array}\right)D\left(\begin{array}{c}
i_2  \\
j_2 \end{array}\right)-D\left(\begin{array}{c}
i_1  \\
j_2 \end{array}\right)D\left(\begin{array}{c}
i_2  \\
j_1 \end{array}\right),
\end{eqnarray*}
where $D\left(\begin{array}{cccc}
i_1&i_2 &\cdots& i_k\\
j_1&j_2 &\cdots& j_k
\end{array}\right),\ i_1<i_2<\cdots<i_k,\ j_1<j_2<\cdots<j_k$ denotes the
determinant of the matrix obtained from $D$ by removing the rows with indices
$i_1,i_2 ,\cdots, i_k$ and the columns with indices $j_1,j_2,\cdots j_k$.
Applying the Jacobi determinant identity, we have the following identities.
\begin{lemma}\label{lemma:bilinear}
  For any $k\geq-1$, $l\in \mathbb{Z}$,
  \begin{eqnarray*}
    &&\Gamma_{k+2}^{l-1}\Gamma_{k}^{l+1}=\Gamma_{k+1}^{l-1}\Gamma_{k+1}^{l+1}-(\Gamma_{k+1}^l)^2,\\
    &&\Gamma_{k+1}^{l-1}\Gamma_{k}^{l+1}=F_{k+1}^{l-1}\Gamma_{k}^{l}-F_{k}^{l-1}\Gamma_{k+1}^{l},\\
    &&\Gamma_{k+1}^{l}F_{k}^{l}=\Gamma_{k}^{l+1}F_{k+1}^{l-1}-\Gamma_{k+1}^{l-1}\Gamma_{k}^{l+2},\\
    &&\Gamma_{k+2}^{l-1}\Gamma_{k}^{l+2}=\Gamma_{k+1}^{l+1}F_{k+1}^{l-1}-\Gamma_{k+1}^lF_{k+1}^l.
  \end{eqnarray*}
\end{lemma}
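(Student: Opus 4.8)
The plan is to read all four identities off a single determinantal mechanism applied to the Hankel array built from $(B_k)$, so that the lemma becomes pure linear algebra, independent of any structure of the $B_k$. The one observation that tames the non-Hankel determinants is a dictionary between deletions and the determinants $\Gamma_k^l,F_k^l$: deleting the \emph{second} column of a consecutive Hankel block (the column immediately to the right of the leftmost one, which we keep) produces an $F$-type determinant with the same leading index, whereas deleting a boundary (first or last) row or column merely shifts the Hankel index $l$. With this dictionary the first and fourth identities fall straight out of the stated Jacobi identity, and the remaining two need one extra bordering step.

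For the first identity I take $D=\Gamma_{k+2}^{l-1}$, regarded as the $(k+2)\times(k+2)$ Hankel matrix with $(i,j)$ entry $B_{i+j+l-1}$, and apply Jacobi with $i_1=j_1=0$, $i_2=j_2=k+1$. Deleting the two boundary rows and columns gives the double minor $\Gamma_k^{l+1}$; the diagonal single minors are $\Gamma_{k+1}^{l-1}$ and $\Gamma_{k+1}^{l+1}$; the two off-diagonal single minors are each $\Gamma_{k+1}^{l}$. For the fourth identity I keep the same $D$ but delete rows $\{0,k+1\}$ and columns $\{0,1\}$: the double minor becomes $\Gamma_k^{l+2}$, the two minors deleting the boundary column $0$ stay Hankel ($\Gamma_{k+1}^{l+1}$ and $\Gamma_{k+1}^l$), and the two minors deleting the interior (second) column $1$ become $F_{k+1}^{l-1}$ and $F_{k+1}^l$ by the dictionary. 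Substituting into $D\cdot D^{i_1,i_2}_{j_1,j_2}=D^{i_1}_{j_1}D^{i_2}_{j_2}-D^{i_1}_{j_2}D^{i_2}_{j_1}$ reproduces the first and fourth relations verbatim.

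The second and third identities are the genuinely different case, and I expect them to be the main obstacle: in each of them every product multiplies a determinant of size $k+1$ by one of size $k$, whereas a $2\times2$ Jacobi identity on a size-$n$ matrix always pairs a size-$n$ determinant with a size-$(n-2)$ one and the cross terms with two size-$(n-1)$ ones, so it can never produce a mixed size-$(k+1)$–times–size-$k$ product. To reach this I would first uniformize heights. Writing $v_c=(B_c,B_{c+1},\dots,B_{c+k})^{T}$ for the height-$(k+1)$ Hankel column with top index $c$, and $[\,\cdots\,]$ for the determinant of the indicated columns, every size-$(k+1)$ factor is already such a bracket, while every size-$k$ factor can be promoted to a $(k+1)\times(k+1)$ bracket by inserting a unit column: bordering with $e_k=(0,\dots,0,1)^{T}$ deletes the last row and preserves index labels, whereas bordering with $e_0=(1,0,\dots,0)^{T}$ deletes the first row and shifts them up by one. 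Using the $e_k$ border throughout the second identity and the $e_0$ border throughout the third turns all six determinants into maximal minors of one rectangular array that share a common block $A=\{v_{l+1},\dots,v_{l+k-1}\}$ and differ only in four distinguished columns (the chosen unit column together with $v_{l-1}$, $v_l$ and $v_{l+k}$); each identity is then exactly one instance of the three-term Grassmann--Plücker relation $[A\,a\,b][A\,c\,d]-[A\,a\,c][A\,b\,d]+[A\,a\,d][A\,b\,c]=0$.

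The delicate points, and where I would spend the care, are therefore twofold. First, in the second and third identities one must pick the correct unit-vector border for each size-$k$ factor ($e_k$ versus $e_0$); the wrong choice leaves the six minors without a common $(k-1)$-column block, so that no single Plücker relation applies, and one must then track the signs of the three bracket products so that they land on the stated right-hand sides. Second, the lemma is asserted for all $k\geq-1$, but for the smallest values the parent matrices are too small for Jacobi or for the bordering argument to be meaningful; these boundary instances ($k=-1$ and $k=0$ in particular) I would verify directly against the conventions $\Gamma_0^l=1$, $\Gamma_k^l=0$ for $k<0$ and $F_k^l=0$ for $k\leq0$, checking that both sides collapse consistently.
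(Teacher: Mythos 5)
Your proposal is correct, and for the first and fourth identities it is literally the paper's proof: the paper applies the quoted Jacobi identity to the parent Hankel matrix $(B_{i+j+l-1})_{i,j=0}^{k+1}$ with $i_1=j_1=1$, $i_2=j_2=k+2$ (first identity) and with $i_1=j_1=1$, $i_2=k+2$, $j_2=2$ (fourth identity), using exactly your dictionary that skipping the second column of a consecutive block yields an $F$-determinant while skipping a boundary row or column shifts $l$. Where you genuinely diverge is on the second and third identities. You correctly diagnose the obstruction --- a $2\times 2$ Jacobi identity on a plain size-$n$ matrix can never produce the mixed size-$(k+1)$ times size-$k$ products --- but the paper's fix differs from yours: it borders the Hankel block \emph{once}, adjoining the unit column $(0,1,0,\dots,0)^{T}$ (with the $1$ in the \emph{second} row) as a new first column, obtaining a single $(k+2)\times(k+2)$ matrix $D_2=D_3$ whose determinant is $-F_{k+1}^{l-1}$ by cofactor expansion along that column (here the symmetry of Hankel matrices converts the skipped second row into the skipped second column defining $F$); applying the same Jacobi identity to this one bordered matrix with the two index choices above then yields both remaining identities with no further sign bookkeeping. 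Your route instead promotes each size-$k$ factor via an $e_k$- or $e_0$-border and invokes the three-term Grassmann--Pl\"ucker relation on six maximal minors sharing the block $\{v_{l+1},\dots,v_{l+k-1}\}$; this does work --- with $a=v_{l-1}$, $b=v_l$, $c=v_{l+k}$ and $d=e_k$ (second identity) or $d=e_0$ (third identity), the $(-1)^{k-1}$ column-sorting signs cancel in pairs and the stated right-hand sides emerge --- and the two mechanisms are of course equivalent, since the three-term Pl\"ucker relation is precisely what Jacobi computes on a bordered matrix. The trade-off: the paper's single border with the $1$ in the second row is the more economical packaging (one matrix, two deletions, no sign chase), while yours is the more systematic and generalizable one. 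One ledger point you got right that the paper's route avoids: your Pl\"ucker argument degenerates at $k=0$ (the common block would have $k-1=-1$ columns) as well as at $k=-1$, so you need direct convention checks at both, whereas the paper's Jacobi argument remains valid at $k=0$ and only the $k=-1$ case is checked against the conventions $\Gamma_0^l=1$, $\Gamma_k^l=0$ for $k<0$, $F_k^l=0$ for $k\leq 0$.
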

\begin{proof}
  The proofs for the case of $k=-1$ are obvious by noting the convention $\Gamma_0^l=1,\Gamma_k^l=0,k<0$ and $F_k^l=0,k\leq0$.

  Now we consider the case of $k\geq0$. Taking
  \[
  D_1=\left|\begin{array}{cccc}
    B_{l-1}&B_{l}&\cdots &B_{l+k}\\
     B_{l}&B_{l+1}&\cdots &B_{l+k+1}\\
    \vdots&\vdots&\ddots&\vdots\\
    B_{l+k}&B_{l+k+1}&\cdots&B_{l+2k+1}
    \end{array}\right|,
    \]
     \[
  D_2=\left|\begin{array}{ccccc}
    0&B_{l-1}&B_{l}&\cdots &B_{l+k-1}\\
    1&B_{l}&B_{l+1}&\cdots &B_{l+k}\\
    0&B_{l+1}&B_{l+2}&\cdots &B_{l+k+1}\\
    \vdots&\vdots&\vdots&\ddots&\vdots\\
    0&B_{l+k}&B_{l+k+1}&\cdots&B_{l+2k}
    \end{array}\right|,
    \]
    and setting
    \[i_1=j_1=1, i_2=j_2=k+2,\]
    the Jacobi identity yields to the first two equalities, respectively.

    The last two equalities are the consequences of applying the Jacobi identity to
    \[
  D_3=\left|\begin{array}{ccccc}
    0&B_{l-1}&B_{l}&\cdots &B_{l+k-1}\\
    1&B_{l}&B_{l+1}&\cdots &B_{l+k}\\
    0&B_{l+1}&B_{l+2}&\cdots &B_{l+k+1}\\
    \vdots&\vdots&\vdots&\ddots&\vdots\\
    0&B_{l+k}&B_{l+k+1}&\cdots&B_{l+2k}
    \end{array}\right|,
    \]
    and
    \[
    D_4=\left|\begin{array}{cccc}
    B_{l-1}&B_{l}&\cdots &B_{l+k}\\
     B_{l}&B_{l+1}&\cdots &B_{l+k+1}\\
    \vdots&\vdots&\ddots&\vdots\\
    B_{l+k}&B_{l+k+1}&\cdots&B_{l+2k+1}
    \end{array}\right|,
    \]
    with
    \[i_1=j_1=1, i_2=k+2, j_2=2,\] respectively.
\end{proof}
\subsection{Combinations of identities}
Applying Lemma \ref{lemma:bilinear}, the following corollaries are easily obtained.
\begin{corollary}\label{coro:sum}
For $0\leq k\leq N-1$,
  \begin{eqnarray*}
    &&\sum_{l=k+1}^{N-1}\frac{(\Gamma_{l+1}^0)^2}{\Gamma_{l+1}^1\Gamma_{l}^1}=\frac{\Gamma_{k+2}^{-1}}{\Gamma_{k+1}^1}-\frac{\Gamma_{N+1}^{-1}}{\Gamma_{N}^1},\\
    &&\sum_{l=k+1}^{N-1}\frac{\Gamma_{l+1}^0\Gamma_{l}^2}{\Gamma_{l+1}^1\Gamma_{l}^1}=\frac{F_{N}^{0}}{\Gamma_{N}^1}-\frac{F_{k+1}^{0}}{\Gamma_{k+1}^1},\\
    &&\sum_{l=k+1}^{N-1}\frac{(\Gamma_{l}^2)^2}{\Gamma_{l+1}^1\Gamma_{l}^1}=\frac{\Gamma_{N-1}^{3}}{\Gamma_{N}^1}-\frac{\Gamma_{k}^{3}}{\Gamma_{k+1}^1},\\
    &&\sum_{l=0}^{k}\frac{(\Gamma_{l}^2)^2}{\Gamma_{l+1}^1\Gamma_{l}^1}=\frac{\Gamma_{k}^{3}}{\Gamma_{k+1}^1}.
  \end{eqnarray*}
\end{corollary}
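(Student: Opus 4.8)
The plan is to recognize all four identities as telescoping sums: each summand is to be rewritten as the difference of two consecutive values of a single sequence, via a suitable specialization of Lemma~\ref{lemma:bilinear}. Since each right-hand side already has the shape ``(value at one endpoint) minus (value at the other)'', the whole task reduces to matching each summand against one of the four bilinear relations and then collapsing the sum.

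For the first sum, I would use the first relation of Lemma~\ref{lemma:bilinear}, specialized so that its squared term becomes $(\Gamma_{l+1}^0)^2$; this reads $(\Gamma_{l+1}^0)^2=\Gamma_{l+1}^{-1}\Gamma_{l+1}^{1}-\Gamma_{l+2}^{-1}\Gamma_{l}^{1}$. Dividing by $\Gamma_{l+1}^1\Gamma_l^1$ turns the summand into $\frac{\Gamma_{l+1}^{-1}}{\Gamma_l^1}-\frac{\Gamma_{l+2}^{-1}}{\Gamma_{l+1}^1}$, and summing over $l$ from $k+1$ to $N-1$ telescopes precisely to $\frac{\Gamma_{k+2}^{-1}}{\Gamma_{k+1}^1}-\frac{\Gamma_{N+1}^{-1}}{\Gamma_{N}^1}$.

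The remaining three go the same way. For the second sum I would use the second relation of Lemma~\ref{lemma:bilinear} in the form $\Gamma_{l+1}^0\Gamma_l^2=F_{l+1}^0\Gamma_l^1-F_l^0\Gamma_{l+1}^1$, so that after dividing by $\Gamma_{l+1}^1\Gamma_l^1$ the summand is $\frac{F_{l+1}^0}{\Gamma_{l+1}^1}-\frac{F_l^0}{\Gamma_l^1}$, which telescopes to the stated answer. For the third and fourth sums I would instead specialize the first relation so that its squared term is $(\Gamma_l^2)^2$, giving $(\Gamma_l^2)^2=\Gamma_l^1\Gamma_l^3-\Gamma_{l+1}^1\Gamma_{l-1}^3$ and hence the summand $\frac{\Gamma_l^3}{\Gamma_{l+1}^1}-\frac{\Gamma_{l-1}^3}{\Gamma_l^1}$. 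Summing this from $k+1$ to $N-1$ yields the third identity; summing it from $0$ to $k$ yields the fourth, once one observes that the lower endpoint contributes $\frac{\Gamma_{-1}^3}{\Gamma_0^1}=0$ by the convention $\Gamma_{-1}^3=0$.

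There is no real obstacle here, consistent with the corollary being ``easily obtained'': the only things needing care are selecting the correct specialization of Lemma~\ref{lemma:bilinear} for each summand and keeping track of the telescoping endpoints, in particular the vanishing boundary term that makes the fourth sum collapse to a single fraction.
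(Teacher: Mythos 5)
Your proposal is correct and matches the paper's own argument: the paper proves the fourth identity by exactly your telescoping decomposition $(\Gamma_l^2)^2=\Gamma_l^1\Gamma_l^3-\Gamma_{l+1}^1\Gamma_{l-1}^3$ (with the convention $\Gamma_{-1}^3=0$) and declares the other three similar. You simply make explicit the analogous specializations of Lemma~\ref{lemma:bilinear} for the remaining sums, all of which check out, including the boundary case $k=-1$ of the lemma needed at $l=0$.
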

\begin{proof}
  We shall take the proof to the last equality for example and the others are omitted because the proofs are similar. The detail of the proof to the last one is
  \begin{eqnarray*}
  &&\sum_{l=0}^{k}\frac{(\Gamma_{l}^2)^2}{\Gamma_{l+1}^1\Gamma_{l}^1}\\
  &=&\sum_{l=0}^{k}\frac{\Gamma_{l}^1\Gamma_{l}^3-\Gamma_{l+1}^1\Gamma_{l-1}^3}{\Gamma_{l+1}^1\Gamma_{l}^1}\\
  &=&\sum_{l=0}^{k}[\frac{\Gamma_{l}^3}{\Gamma_{l+1}^1}-\frac{\Gamma_{l-1}^3}{\Gamma_{l}^1}]\\
  &=&\frac{\Gamma_{k}^{3}}{\Gamma_{k+1}^1},
  \end{eqnarray*}
  where we have used the convention $\Gamma_{-1}^3=0$.
\end{proof}
\begin{corollary}\label{coro:multiple-identity}
For $0\leq k\leq N-1$, there hold
  \begin{eqnarray}
  &&(F_{k+1}^{-1}+F_{k}^1)\Gamma_{k}^2-F_{k}^1\Gamma_{k+1}^0\nonumber\\
  &=&(\Gamma_{k}^2)^2(\frac{\Gamma_{k+2}^{-1}}{\Gamma_{k+1}^1}+\frac{F_{k+1}^{0}}{\Gamma_{k+1}^1})+((\Gamma_{k+1}^0)^2-\Gamma_{k+1}^0\Gamma_{k}^2)\frac{\Gamma_{k}^{3}}{\Gamma_{k+1}^1},\label{multi-identity1}\\
&&2\Gamma_{k+1}^1\Gamma_{k}^1(2F_{k+1}^{-1}+2F_k^1)-\Gamma_{k+1}^0[(2F_{k+1}^0-\Gamma_{k}^3)\Gamma_{k}^1+\Gamma_{k+1}^1(2F_{k}^0-\Gamma_{k-1}^3)]\nonumber\\
&=&\Gamma_{k}^1\Gamma_{k}^2(4\Gamma_{k+2}^{-1}+4F_{k+1}^0-\Gamma_{k}^{3})-\Gamma_{k+1}^1\Gamma_{k-1}^{3}(2\Gamma_{k+1}^0-\Gamma_{k}^2)\nonumber\\
&&+(\Gamma_{k+1}^0-\Gamma_{k}^2)[2\Gamma_{k+1}^0\Gamma_{k}^2-(\Gamma_{k}^2)^2].\label{multi-identity2}
\end{eqnarray}
\end{corollary}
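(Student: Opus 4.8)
The plan is to treat each of the two displayed identities as an algebraic consequence of the four bilinear relations in Lemma 3.5 together with the four summation formulas in Corollary 3.6, rather than going back to Jacobi's identity directly. The essential observation is that every term appearing on both sides is a ratio-free polynomial in the $\Gamma_k^l$ and $F_k^l$, but the natural route passes through the telescoping sums of Corollary 3.6, whose right-hand sides are exactly the ``boundary'' quantities $\Gamma_{k+2}^{-1}/\Gamma_{k+1}^1$, $F_{k+1}^0/\Gamma_{k+1}^1$ and $\Gamma_k^3/\Gamma_{k+1}^1$ that show up in \eqref{multi-identity1}. So I would first clear denominators mentally and aim to verify each identity after multiplying through by the appropriate product of Hankel determinants (e.g.\ $\Gamma_{k+1}^1$ for the first, and $\Gamma_{k+1}^1\Gamma_k^1$ for the second).

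For \eqref{multi-identity1} I would proceed as follows. First, rewrite the three ratios on the right using the first, second and third summation formulas of Corollary 3.6 evaluated at a single index, i.e.\ isolate the single-term versions
\[
\frac{\Gamma_{k+2}^{-1}}{\Gamma_{k+1}^1},\qquad \frac{F_{k+1}^0}{\Gamma_{k+1}^1},\qquad \frac{\Gamma_k^3}{\Gamma_{k+1}^1},
\]
so that multiplying the whole identity by $\Gamma_{k+1}^1$ turns it into a polynomial relation among $\Gamma$'s and $F$'s at levels $k-1,k,k+1,k+2$. Then I would substitute the four bilinear relations of Lemma 3.5 (taken at the relevant $k,l$) to rewrite the mixed products $F_{k+1}^{-1}\Gamma_k^2$, $F_k^1\Gamma_k^2$, $F_k^1\Gamma_{k+1}^0$, and the squared and cross terms $(\Gamma_k^2)^2$, $\Gamma_{k+1}^0\Gamma_k^2$, $(\Gamma_{k+1}^0)^2$ in a common set of generators. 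The goal is to reduce both sides to the same expression; the bookkeeping is the whole content here.

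Identity \eqref{multi-identity2} is the harder one and I expect it to be the main obstacle: it mixes products at consecutive levels ($\Gamma_{k+1}^1\Gamma_k^1$, $\Gamma_{k+1}^0\Gamma_k^1$, $\Gamma_k^1\Gamma_k^2$, $\Gamma_{k+1}^1\Gamma_{k-1}^3$) with the quadratic combination $(\Gamma_{k+1}^0-\Gamma_k^2)[2\Gamma_{k+1}^0\Gamma_k^2-(\Gamma_k^2)^2]$, and the presence of both $\Gamma_{k-1}^3$ and $\Gamma_k^3$ forces the use of \emph{two} different instances of the first bilinear relation of Lemma 3.5 (shifting $k\mapsto k-1$ as well). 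My strategy would be to clear the common factor $\Gamma_{k+1}^1\Gamma_k^1$ (after writing the ratios via Corollary 3.6 as above), collect terms by their total degree in the $B$'s, and match them block by block. The likely pitfall is sign and index tracking: the relation $\Gamma_{k+2}^{l-1}\Gamma_k^{l+1}=\Gamma_{k+1}^{l-1}\Gamma_{k+1}^{l+1}-(\Gamma_{k+1}^l)^2$ must be applied with $l=0$ and $l=1$, and the $F$-relations with $l=1$ and $l=2$, so a careful tabulation of which Lemma 3.5 instance kills which cross term is needed before any cancellation becomes visible.

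Because the statement asserts these as identities valid for all $0\le k\le N-1$ with no genericity hypothesis, I would finally note that it suffices to verify them as formal algebraic identities in the entries $B_j$ (the Hankel and $F$-determinants being polynomials in the $B_j$), so that no nonvanishing assumption on any $\Gamma_k^1$ is actually required once the denominators are cleared; the ratio form in the statement is just a convenient packaging of the cleared polynomial identity.
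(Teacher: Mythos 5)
Your plan is essentially the paper's own proof: both identities are established by direct elimination of the mixed $F$-terms via the bilinear identities of Lemma~\ref{lemma:bilinear} — for \eqref{multi-identity1} the fourth identity at $l=0$ (giving $\Gamma_{k+1}^1F_{k+1}^{-1}=\Gamma_{k+2}^{-1}\Gamma_k^2+\Gamma_{k+1}^0F_{k+1}^0$) and the third at $l=1$, and for \eqref{multi-identity2} additionally the second at $l=1$ and the first at the shifted index $(k\mapsto k-1,\,l=2)$ to remove $\Gamma_{k-1}^3$ — after which the cleared polynomial relation is checked by matching terms, exactly as you propose, and your closing remark that the cleared form needs no nonvanishing hypothesis is sound. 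Two small corrections to your bookkeeping: Corollary~\ref{coro:sum} plays no role in this proof (its telescoping sums are themselves consequences of the same Lemma~\ref{lemma:bilinear} instances, and it is invoked only later, in the proof of Theorem~\ref{th:GNCH-solution}), and \eqref{multi-identity2} is already denominator-free, so no clearing by $\Gamma_{k+1}^1\Gamma_k^1$ is required — only \eqref{multi-identity1} needs multiplication by $\Gamma_{k+1}^1$.
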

\begin{proof}
The first equality can be confirmed by applying Lemma \ref{lemma:bilinear} and eliminating $F_k^{-1},F_k^0,F_k^1$. More exactly, the proofs can be completed by noting that
\begin{eqnarray*}
  &&F_{k+1}^{-1}\Gamma_{k}^2+F_{k}^1\Gamma_{k}^2-F_{k}^1\Gamma_{k+1}^0-(\Gamma_{k}^2)^2\frac{F_{k+1}^{0}}{\Gamma_{k+1}^1}\\
  &=&\frac{\Gamma_{k}^{2}}{\Gamma_{k+1}^1}(\Gamma_{k+1}^0F_{k+1}^0+\Gamma_{k+2}^{-1}\Gamma_{k}^2)+\frac{\Gamma_{k}^{2}}{\Gamma_{k+1}^1}(\Gamma_{k+1}^1F_{k}^1)\\
  &&-\frac{\Gamma_{k+1}^{0}}{\Gamma_{k+1}^1}(\Gamma_{k}^2F_{k+1}^0-\Gamma_{k+1}^0\Gamma_{k}^3)-\frac{\Gamma_{k}^{2}}{\Gamma_{k+1}^1}(\Gamma_{k+1}^1F_{k}^1+\Gamma_{k+1}^0\Gamma_{k}^3)\\
  &=&(\Gamma_{k}^2)^2\frac{\Gamma_{k+2}^{-1}}{\Gamma_{k+1}^1}+((\Gamma_{k+1}^0)^2-\Gamma_{k+1}^0\Gamma_{k}^2)\frac{\Gamma_{k}^{3}}{\Gamma_{k+1}^1}.
\end{eqnarray*}

The second equality may also be verified by employing Lemma \ref{lemma:bilinear} and eliminating $F_k^{-1},F_k^0,F_k^1,\Gamma_{k-1}^3$. More precisely, it is proved by noting that
\begin{eqnarray*}
&&2\Gamma_{k+1}^1\Gamma_{k}^1(2F_{k+1}^{-1}+2F_k^1)-\Gamma_{k+1}^0(2F_{k+1}^0\Gamma_{k}^1+2\Gamma_{k+1}^1F_{k}^0)-4\Gamma_{k}^1\Gamma_{k}^2F_{k+1}^0\\
&&+3\Gamma_{k+1}^0\Gamma_{k+1}^1\Gamma_{k-1}^3-\Gamma_{k+1}^1\Gamma_{k-1}^3\Gamma_{k}^2\\
&=&4\Gamma_{k}^1(\Gamma_{k+1}^1F_{k+1}^{-1}-\Gamma_{k+1}^0F_{k+1}^0)+4\Gamma_{k}^1(\Gamma_{k+1}^1F_k^1-\Gamma_{k}^2F_{k+1}^0)\\
&&+2\Gamma_{k+1}^0(F_{k+1}^0\Gamma_{k}^1-\Gamma_{k+1}^1F_{k}^0)+3\Gamma_{k+1}^0(\Gamma_{k}^1\Gamma_{k}^3-(\Gamma_{k}^2)^2)+\Gamma_{k}^2(\Gamma_{k}^1\Gamma_{k}^3-(\Gamma_{k}^2)^2)\\
&=&4\Gamma_{k}^1\Gamma_{k+2}^{-1}\Gamma_{k}^2-\Gamma_{k}^1\Gamma_{k+1}^0\Gamma_{k}^3+2\Gamma_{k+1}^0\Gamma_{k+1}^0\Gamma_{k}^2-3\Gamma_{k+1}^0(\Gamma_{k}^2)^2-\Gamma_{k}^1\Gamma_{k}^2\Gamma_{k}^3+(\Gamma_{k}^2)^3.
\end{eqnarray*}
\end{proof}

\section{The GNCH equation \eqref{equ:GNCH}}
We mention that Theorem \ref{th:beals}, which describes the N-peakon solutions to the Camassa-Holm equation,  can be proved by using determinant technique. Assume that the form of the N-peakon solution is the same as \eqref{sol:C-H_form}. By modifying the time evolution in \eqref{rela:a}, we obtain a generalized equation after some technical inverse operations, which is the GNCH equation \eqref{equ:GNCH}. Obviously, it also admits the N-peakon solution. The result is described as below.
\begin{theorem}\label{th:GNCH-solution}
The GNCH equation \eqref{equ:GNCH} admits the N-peakon solution having the form of
\begin{equation}\label{sol:GNCH_form}
  u(x,t)=\frac{1}{2}\sum_{j=1}^Nm_j(t)exp(-2|x-x_j(t)|).
\end{equation}
The explicit expressions for $m_j$ and $x_j$ are as follows:
\begin{equation}\label{expression:xm2}
    x_j=\frac{1}{2}\log\left(\frac{1+y_j}{1-y_j}\right), \ \ \ \ m_j=g_j(1-y_j^2),
  \end{equation}
  with
  \begin{equation}\label{expression:gy2}
    y_j=1-\frac{\Delta_{N-j}^2}{\Delta_{N-j+1}^0}, \ \ \ \ g_j=\frac{(\Delta_{N-j+1}^0)^2}{\Delta_{N-j+1}^1\Delta_{N-j}^1}.
  \end{equation}
 Here $\Delta_k^l=\det(A_{i+j+l}(t))_{i,j=0}^{k-1}$ and the moments $A_k(t)$ are restricted by
  \begin{eqnarray}\
      A_k(t)=\sum_{j=0}^{N}(-\lambda_j)^k e^{[r(k+1)+2s]a_j(t)},\ \ \ a_0(t)=\frac{\log\frac{1}{2}}{r+2s},\ \ \ \lambda_0=0
  \end{eqnarray}
  with $\dot{a}_j(t)=-\frac{1}{\lambda_je^{ra_j(t)}}, j\geq1$ and nonzero real constants $\lambda_j$.
\end{theorem}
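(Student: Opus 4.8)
\section*{Proof proposal}

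The plan is to prove the theorem by direct verification, in two stages: first reduce the partial differential equation \eqref{equ:GNCH} to an ordinary differential system for the peakon data $(x_j(t),m_j(t))$, and then show that the determinant formulas \eqref{expression:xm2}--\eqref{expression:gy2} solve that system. The organizing device is the substitution
\[
\mu_j=-\lambda_j e^{ra_j},\qquad c_j=e^{(2s+r)a_j},
\]
under which $A_k=\sum_{j=0}^N\mu_j^kc_j$ with $\mu_0=0$ and $c_0=\tfrac12$. This is exactly the abstract moment form of Corollary \ref{coro:linear} (replacing $B_k$ by $A_k$ and $\Gamma$ by $\Delta$), so \emph{all} the linear, bilinear and combination identities of Section 3 become available for the $\Delta_k^l$. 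A short computation gives the clean dynamics $\dot\mu_j=r$ and $\dot c_j=(2s+r)c_j/\mu_j$, whence
\[
\dot A_k=[r(k+1)+2s]\,A_{k-1}\qquad(k\ge 2),
\]
with boundary corrections at $k=0,1$ coming from the $\mu_0=0$ term (using the convention $0^0=1$). This single relation drives the entire determinant side of the argument.

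First I would substitute the ansatz \eqref{sol:GNCH_form} together with the discrete measure $m\,dx=\sum_j m_j\delta(x-x_j)\,dx$ into \eqref{equ:GNCH}. Writing the flux as $U=(s+r)u-r\int_x^\infty u\,dy$, so that $U'=(s+r)u_x+ru$ and the equation reads $m_t+(Um)_x+mU'=0$, one notes that $U$ is continuous across each $x_j$; hence $(Um)_x$ contributes only $\sum_j m_jU(x_j)\,\delta'(x-x_j)$, while the source $mU'$ contributes $\sum_j m_j\langle U'(x_j)\rangle\,\delta(x-x_j)$. Collecting the $\delta'$- and $\delta$-terms separately yields the peakon system
\[
\dot x_j=(s+r)u(x_j)-r\!\int_{x_j}^\infty\! u\,dy,\qquad
\dot m_j=-m_j\bigl[(s+r)\langle u_x(x_j)\rangle+r\,u(x_j)\bigr],
\]
which correctly collapses to \eqref{equ:CH-Hamilton} when $s=1,\ r=0$. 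The quantities $u(x_j)$, $\langle u_x(x_j)\rangle$ and $\int_{x_j}^\infty u\,dy$ are then written out as explicit finite sums in the $m_k$ and $e^{-2|x_j-x_k|}$, the ordering $x_1<\cdots<x_N$ splitting each into partial sums over $k<j$ and $k>j$.

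The heart of the proof is to check that \eqref{expression:xm2}--\eqref{expression:gy2} satisfy this system. Passing to the Liouville variables $y_j=\tanh x_j$, $g_j=m_j/(1-y_j^2)$, the exponential sums defining $u(x_j)$, $\langle u_x(x_j)\rangle$ and $\int_{x_j}^\infty u\,dy$ telescope, via the partial-sum identities of Corollary \ref{coro:sum}, into closed ratios of the $\Delta_k^l$; this turns the right-hand sides of the peakon system into rational expressions in Hankel determinants. On the other side, I would differentiate $y_j=1-\Delta_{N-j}^2/\Delta_{N-j+1}^0$ and $g_j=(\Delta_{N-j+1}^0)^2/(\Delta_{N-j+1}^1\Delta_{N-j}^1)$ in $t$, using $\dot A_k=[r(k+1)+2s]A_{k-1}$ to evaluate $\dot\Delta_k^l$ and the bilinear identities of Lemma \ref{lemma:bilinear} to simplify. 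Equating the two computations of $\dot x_j$ (respectively $\dot m_j$) reduces, after clearing denominators, precisely to the combination identities \eqref{multi-identity1}--\eqref{multi-identity2} of Corollary \ref{coro:multiple-identity}, which close the argument.

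I expect the main obstacle to be the evaluation and simplification of $\dot\Delta_k^l$. Because the weight $r(k+1)+2s$ in the relation $\dot A_k=[r(k+1)+2s]A_{k-1}$ depends linearly on the index, the induced derivative of a Hankel determinant is not a single index shift but splits into a uniform shift part (proportional to $r+2s$) and an index-weighted part (proportional to $r$) that acts like a simultaneous row/column scaling; keeping this combination under control and then matching it against the telescoped partial differential equation side through Corollary \ref{coro:multiple-identity} is the genuinely delicate computation. Secondary care is needed with the sign conventions at the kinks (the averaging $\langle\,\cdot\,\rangle$ and the factors $\mathrm{sgn}(x_j-x_k)$) and with the boundary indices $k=0,1$ and the entry $A_{-1}$ that appear in $\Gamma_{N+1}^{-1}$ and the $F$-determinants.
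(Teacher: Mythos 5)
Your proposal is correct and follows essentially the same route as the paper: reducing \eqref{equ:GNCH} to the peakon system \eqref{equ:GNCH-Hamilton}, passing to the variables $y_j,g_j$, and verifying the determinant formulas via the evolution $\dot A_k=[r(k+1)+2s]A_{k-1}$ with the $k=0,1$ corrections (packaged in the paper as Lemma \ref{lemma:derivative2}), the telescoping sums of Corollary \ref{coro:sum}, and the identities of Lemma \ref{lemma:bilinear} and Corollary \ref{coro:multiple-identity}. Even your substitution $\mu_j=-\lambda_j e^{ra_j}$, $c_j=e^{(r+2s)a_j}$ is precisely the paper's remark that the Section 3 identities hold with $b_k,\mu_k$ replaced by $e^{(r+2s)a_k}$, $-\lambda_k e^{ra_k}$, and your anticipated splitting of $\dot\Delta_k^l$ into a uniform-shift part and an index-weighted part is exactly what that lemma resolves.
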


 We shall present our result
by the way of proving Theorem \ref{th:GNCH-solution} rather than how to derive the GNCH equation \eqref{equ:GNCH}. As the Camassa-Holm equation \eqref{equ:CH} is the special case of the generalized nonisospectral Camassa-Holm equation \eqref{equ:GNCH}, it means that we give an alternative proof to the case of the Camassa-Holm equation.

In order to prove Theorem \ref{th:GNCH-solution}, we need some lemmas. We remark that Corollaries \ref{coro:linear},\ref{coro:sum},\ref{coro:multiple-identity} and Lemma \ref{lemma:bilinear} in Section 3 still hold in the case of replacing $\Gamma_k^l, F_k^l, b_k, \mu_k$ by $\Delta_k^l,G_k^l, e^{(r+2s)a_k}, -\lambda_ke^{ra_k}$, respectively.
Here $G_k^l$ is defined by
\[G_k^l=
  \left|\begin{array}{ccccc}
    A_l&A_{l+2}&A_{l+3}&\cdots &A_{l+k}\\
     A_{l+1}&A_{l+3}&A_4&\cdots &A_{l+k+1}\\
    \vdots&\vdots&\vdots&\ddots&\vdots\\
    A_{l+k-1}&A_{l+k+1}&A_{k+2}&\cdots&A_{l+2k-1}
    \end{array}\right|
  \]
with the convention $G_k^l=0,k\leq0$.

Besides, as for the derivative of $\Delta_k^l$ with respect to $t$, we also have the following properties.
\begin{lemma}\label{lemma:derivative2}
For $k\geq0$,
\begin{eqnarray}
  &&\dot{\Delta_k^0}=(r+2s)G_k^{-1}+(2r+2s)G_{k-1}^1,\\
  &&\dot{\Delta_k^1}=(2r+2s)G_k^{0}-(r+s)\Delta_{k-1}^3,\\
  &&\dot{\Delta_k^l}=[r(l+1)+2s]G_k^{l-1}, \ \ l\geq2.
\end{eqnarray}
\end{lemma}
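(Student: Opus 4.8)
The plan is to compute the $t$-derivative of each Hankel determinant directly from the definition $\Delta_k^l=\det(A_{i+j+l})_{i,j=0}^{k-1}$, using the fact that the derivative of a determinant is the sum of the determinants obtained by differentiating one row (or column) at a time. The whole argument therefore rests on first obtaining a clean formula for $\dot A_m$, the derivative of a single moment. From the definition $A_m(t)=\sum_{j=0}^N(-\lambda_j)^m e^{[r(m+1)+2s]a_j(t)}$ together with $\dot a_j=-(\lambda_j e^{ra_j})^{-1}$ for $j\geq1$ and $\dot a_0=0$, I would first differentiate term by term. Writing $c_j:=e^{(r+2s)a_j}$ and $\nu_j:=-\lambda_j e^{ra_j}$ (these are exactly the substitutions the paragraph before the lemma tells us to make for $b_k,\mu_k$), one checks that $A_m=\sum_j \nu_j^m c_j$ and that differentiating produces a shift in the index $m$. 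The key computation is to verify that $\dot A_m$ is an explicit $\mathbb{C}$-linear combination of $A_{m-1}$, $A_m$ and the objects needed to assemble $G_k^l$; the $r,s$-dependent coefficients $r(l+1)+2s$, $2r+2s$, $r+2s$ appearing in the three claimed formulas must fall out of this single moment computation.

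Next I would feed this expression for $\dot A_m$ into the row-by-row differentiation of $\Delta_k^l$. Differentiating the $i$-th row of $(A_{i+j+l})$ replaces its entries $A_{i+j+l}$ by $\dot A_{i+j+l}$, and after substituting the formula for $\dot A_m$ each such term splits into a sum of determinants. The determinants in which a row has been replaced by a copy of an adjacent row vanish, and the surviving pieces reorganize into the Hankel-type determinant $\Delta$'s and the modified determinant $G_k^l$ (whose column pattern $A_l,A_{l+2},A_{l+3},\dots$ is precisely what arises when one row-index is shifted down by one while the others are not). This is the step where the three separate cases $l\geq2$, $l=1$, and $l=0$ genuinely diverge: for $l\geq 2$ every shifted index stays in a ``generic'' range and one gets the single clean term $[r(l+1)+2s]G_k^{l-1}$, whereas for $l=1$ and $l=0$ the downward index shift reaches the bottom of the allowed range and produces the extra correction terms $-(r+s)\Delta_{k-1}^3$ and $(2r+2s)G_{k-1}^1$ respectively.

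The main obstacle I expect is bookkeeping rather than conceptual: correctly tracking, for each of the three values of $l$, which differentiated-row determinants collapse into multiples of $G_k^{l-1}$, which collapse into the ``shifted'' Hankel determinants $\Delta_{k-1}^3$ or $G_{k-1}^1$, and which vanish outright. In particular the appearance of $\Delta_{k-1}^3$ (a \emph{reduced-size} determinant) in the $l=1$ case signals that a boundary row must be handled separately, and getting its coefficient $-(r+s)$ exactly right—as opposed to the ``bulk'' coefficient—is the delicate point. A useful sanity check I would apply throughout is the specialization $s=1,r=0$, which must reduce these formulas to the corresponding isospectral identities for the original moments $A_k=\sum_j(-\lambda_j)^k a_j$ underlying Theorem \ref{th:beals}; matching that limit confirms both the structure of the surviving determinants and the precise values of the $r,s$-coefficients. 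The $k=0$ (and implicitly $k<0$) edge cases follow immediately from the conventions $\Delta_0^l=1$, $\Delta_k^l=0$ for $k<0$, and $G_k^l=0$ for $k\leq0$.
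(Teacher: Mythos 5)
Your proposal is correct and is essentially the paper's own proof: the paper likewise first derives the moment ODE $\dot A_k=[r(k+1)+2s]A_{k-1}$ for $k=0$ and $k\geq 2$, together with the exceptional relation $\dot A_1=(2r+2s)A_0-(r+s)$ (whose constant term is exactly the source of your boundary corrections $(2r+2s)G_{k-1}^1$ at $l=0$ and $-(r+s)\Delta_{k-1}^3$ at $l=1$), and then differentiates $\Delta_k^l$ column by column, collapsing the result via cofactor expansion and the vanishing of determinants with a repeated row or column. The one point your sketch leaves implicit, and which is the only genuinely delicate bookkeeping, is that the coefficient $r(i+j+l+1)+2s$ attached to the shifted entry $A_{i+j+l-1}$ depends on both indices, so a differentiated column is \emph{not} a scalar multiple of its neighbor; the paper resolves this by splitting the coefficient inside a double cofactor sum (in effect $r(i+j+l+1)+2s=[r(i+1)+2s]+r(j+l)$), after which only the $i=0$ and $j=0$ boundary terms survive and combine to give exactly $[r(l+1)+2s]G_k^{l-1}$.
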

\begin{proof}
Here we shall give a detailed proof of the first formula. The proofs
of the rest two can be achieved by following the steps of the proof of the first one.

 It is easy to see that the element $A_k(t)$ satisfy the differential equation
\begin{equation}\label{rela:A_t2}
  \dot{A}_k=[r(k+1)+2s]A_{k-1}, \ \ \ k=0 \ {\text{and}}\ k\geq2, \ \ \  \dot{A}_1=(2r+2s)A_{0}-(r+s),
\end{equation}
where we have used $\lambda_0=0$  and the convention $0^0=1$.

By using the differential rule for determinants, we have
\begin{eqnarray*}
  \dot{\Delta_k^0}
  &=&\sum_{j=0}^{k-1}\left|\begin{array}{cccccccc}
    A_0&A_1&\cdots&A_{j-1}&\dot{A}_j&A_{j+1}&\cdots&A_{k-1}\\
    A_1&A_2&\cdots&A_{j}&\dot{A}_{j+1}&A_{j+2}&\cdots&A_{k}\\
    \vdots&\vdots&\ddots&\vdots&\vdots&\vdots&\ddots&\vdots\\
    A_{k-1}&A_{k}&\cdots&A_{k+j-2}&\dot{A}_{k+j-1}&A_{k+j}&\cdots&A_{2k-2}
    \end{array}\right|\\
    &=&(2r+2s)G_{k-1}^{1}\\
    &&+\sum_{j=0}^{k-1}\left|\begin{array}{cccccccc}
    A_0&A_1&\cdots&A_{j-1}&[r(j+1)+2s]{A}_{j-1}&A_{j+1}&\cdots&A_{k-1}\\
    A_1&A_2&\cdots&A_{j}&[r(j+2)+2s]{A}_{j}&A_{j+2}&\cdots&A_{k}\\
    \vdots&\vdots&\ddots&\vdots&\vdots&\vdots&\ddots&\vdots\\
    A_{k-1}&A_{k}&\cdots&A_{k+j-2}&[r(j+k)+2s]{A}_{k+j-2}&A_{k+j}&\cdots&A_{2k-2}
    \end{array}\right|\\
    &=&(2r+2s)G_{k-1}^{1}\\
    &&+\sum_{j=0}^{k-1}\left|\begin{array}{cccccccc}
    A_0&A_1&\cdots&A_{j-1}&[r+2s]{A}_{j-1}&A_{j+1}&\cdots&A_{k-1}\\
    A_1&A_2&\cdots&A_{j}&[2r+2s]{A}_{j}&A_{j+2}&\cdots&A_{k}\\
    \vdots&\vdots&\ddots&\vdots&\vdots&\vdots&\ddots&\vdots\\
    A_{k-1}&A_{k}&\cdots&A_{k+j-2}&[rk+2s]{A}_{k+j-2}&A_{k+j}&\cdots&A_{2k-2}
    \end{array}\right|\\
    &=&(2r+2s)G_{k-1}^{1}+\sum_{j=0}^{k-1}\sum_{i=0}^{k-1}(-1)^{i+j}[r(i+1)+2s]A_{i+j-1}\Delta_k^0\binom{i+1}{j+1}\\
    &=&(2r+2s)G_{k-1}^{1}+\sum_{i=0}^{k-1}(-1)^i[r(i+1)+2s]\sum_{j=0}^{k-1}(-1)^{j}A_{i+j-1}\Delta_k^0\binom{i+1}{j+1}.
\end{eqnarray*}

Noting that
\[
\sum_{j=0}^{k-1}(-1)^{j}A_{j-1}\Delta_k^0\binom{1}{j+1}=G_k^{-1}
\]
and
\[
\sum_{j=0}^{k-1}(-1)^{j}A_{i+j-1}\Delta_k^0\binom{i+1}{j+1}=0 \ \ \ \text{for}\ \ \  i=1,2,\cdots,k-1,
\]
we get
\[
\dot{\Delta_k^0}=(2r+2s)G_{k-1}^1+(r+2s)G_k^{-1}.
\]
Thus the proof is completed.
\end{proof}

\textbf{Remark:} It is noted that we have introduced the term $A_{-1}$, while Beals et al. \cite{beals1999multipeakons,beals2000multipeakons} didn't. That's because it will be helpful for proving the conclusion by use of the determinant technique.

Now we shall present the proof to Theorem \ref{th:GNCH-solution} by employing  Corollaries \ref{coro:linear}, \ref{coro:sum}, \ref{coro:multiple-identity} and Lemma \ref{lemma:bilinear}, \ref{lemma:derivative2}.

\begin{proof}[The proof to Theorem \ref{th:GNCH-solution}]
Assume that $m$ is taken to be a discrete measure as \eqref{measure:m}. The GNCH equation \eqref{equ:GNCH} is reduced to
\begin{equation}\label{equ:GNCH-Hamilton}
 \dot{x}_j=(s+r)u(x_j)-r\int_{x_j}^\infty u(x,t)d x, \ \ \ \ \dot{m}_j=-[(s+r)<u_x(x_j)>+ru(x_j)]m_j.
\end{equation}
Note that
\[
\int_{x_j}^\infty u(x,t)d x=\frac{1}{2}\sum_{i=j}^Nm_i+\frac{1}{4}\sum_{i=1}^{j-1}m_ie^{2(x_i-x_j)}-\frac{1}{4}\sum_{i=j}^Nm_ie^{2(x_j-x_i)}
\]
and
\[
 <u_x(x_j)>=-\sum_{i=1}^{j-1}m_ie^{2(x_i-x_j)}+\sum_{i=j+1}^Nm_ie^{2(x_j-x_i)}.
\]
If we substitute \eqref{expression:xm2} into \eqref{equ:GNCH-Hamilton}, then \eqref{equ:GNCH-Hamilton} is written as
\begin{eqnarray*}
  &&\frac{\dot{y}_j}{1-y_j^2}=(\frac{r}{4}+\frac{s}{2})\sum_{i=1}^{j-1}g_i(1-y_i^2)\frac{(1+y_i)(1-y_j)}{(1-y_i)(1+y_j)}\\
  &&\qquad \qquad +(\frac{3r}{4}+\frac{s}{2})\sum_{i=j}^Ng_i(1-y_i^2)\frac{(1+y_j)(1-y_i)}{(1-y_j)(1+y_i)}-\frac{r}{2}\sum_{i=j}^Ng_i(1-y_i^2),\\
  &&\dot{g}_j(1-y_j^2)-2y_j\dot{y}_jg_j=g_j(1-y_j^2)\left[(\frac{r}{2}+s)\sum_{i=1}^{j-1}g_i(1-y_i^2)\frac{(1+y_i)(1-y_j)}{(1-y_i)(1+y_j)} \right.\\
  &&\qquad \qquad \qquad \qquad \left.-\frac{r}{2}g_j(1-y_j^2)\right]-(\frac{3r}{2}+s)\sum_{i=j+1}^Ng_i(1-y_i^2)\frac{(1+y_j)(1-y_i)}{(1-y_j)(1+y_i)}.\\
\end{eqnarray*}
The above system can be simplified as
\begin{eqnarray}
  \dot{y}_j
  &=&(\frac{r}{4}+\frac{s}{2})(1-y_j)^2\sum_{i=1}^{j-1}g_i(1+y_i)^2 +(\frac{3r}{4}+\frac{s}{2})(1+y_i)^2\sum_{i=j}^Ng_i(1-y_i)^2\nonumber\\
  &&-\frac{r}{2}(1-y_j^2)\sum_{i=j}^Ng_i(1-y_i^2),\label{equ:GNCH-Hamilton-y}\\
  \dot{g}_j
  &=&(\frac{r}{2}+s)g_j(1-y_j)\sum_{i=1}^{j-1}g_i(1+y_i)^2-(\frac{3r}{2}+s)g_j(1+y_j)\sum_{i=j+1}^Ng_i(1-y_i)^2\nonumber\\
  &&-rg_jy_j\sum_{i=j+1}^{N}g_i(1-y_i^2)+(\frac{r}{2}+s)g_j^2y_j(1-y_j^2)-\frac{r}{2}g_j^2(1-y_j^2)\label{equ:GNCH-Hamilton-g}
\end{eqnarray}
where the $\dot{y}_j$ in the second equation has been eliminated by inserting the first equation to the second one.

Therefore, in order to confirm Theorem \ref{th:GNCH-solution}, it is sufficient
to prove that $y_j,g_j$ satisfy the system \eqref{equ:GNCH-Hamilton-y} and \eqref{equ:GNCH-Hamilton-g}.

Substituting the expressions \eqref{expression:gy2} into Eq. \eqref{equ:GNCH-Hamilton-y}, we see that \eqref{equ:GNCH-Hamilton-y} is equivalent to
\begin{eqnarray*}
    &&\dot{\Delta}_{N-j+1}^0\Delta_{N-j}^2-\dot{\Delta}_{N-j}^2\Delta_{N-j+1}^0\\
    &=&(r+2s)(\Delta_{N-j}^2)^2\sum_{i=1}^{j-1}\frac{(\Delta_{N-i+1}^0)^2}{\Delta_{N-i+1}^1\Delta_{N-i}^1}-(r+2s)(\Delta_{N-j}^2)^2\sum_{i=1}^{j-1}\frac{\Delta_{N-i+1}^0\Delta_{N-i}^2}{\Delta_{N-i+1}^1\Delta_{N-i}^1}\\
    &&+(\frac{r}{4}+\frac{s}{2})(\Delta_{N-j}^2)^2\sum_{i=1}^{j-1}\frac{(\Delta_{N-i}^2)^2}{\Delta_{N-i+1}^1\Delta_{N-i}^1}\\
    &&+(\frac{3r}{4}+\frac{s}{2})(2\Delta_{N-j+1}^0-\Delta_{N-j}^2)^2\sum_{i=j}^N\frac{(\Delta_{N-i}^2)^2}{\Delta_{N-i+1}^1\Delta_{N-i}^1}\\
    &&-\frac{r}{2}[2\Delta_{N-j}^2\Delta_{N-j+1}^0-(\Delta_{N-j}^2)^2]\sum_{i=j}^N[2\frac{\Delta_{N-i+1}^0\Delta_{N-i}^2}{\Delta_{N-i+1}^1\Delta_{N-i}^1}-\frac{(\Delta_{N-i}^2)^2}{\Delta_{N-i+1}^1\Delta_{N-i}^1}]
  \end{eqnarray*}
where we have simplified it.

Applying Lemma \ref{lemma:derivative2} and Corollary \ref{coro:sum} and replacing $N-j$ by $k$, the above equation can be written as
\begin{eqnarray*}
  &&[(r+2s)G_{k+1}^{-1}+(2r+2s)G_{k}^1]\Delta_{k}^2-(3r+2s)G_{k}^1\Delta_{k+1}^0\\
  &=&(r+2s)(\Delta_{k}^2)^2(\frac{\Delta_{k+2}^{-1}}{\Delta_{k+1}^1}-\frac{\Delta_{N+1}^{-1}}{\Delta_{N}^1})-(r+2s)(\Delta_{k}^2)^2(\frac{G_{N}^{0}}{\Delta_{N}^1}-\frac{G_{k+1}^{0}}{\Delta_{k+1}^1})\\
  &&+(\frac{r}{4}+\frac{s}{2})\frac{1}{4}(\Delta_k^2)^2(\frac{\Delta_{N-1}^{3}}{\Delta_{N}^1}-\frac{\Delta_{k}^{3}}{\Delta_{k+1}^1})+(\frac{3r}{4}+\frac{s}{2})(2\Delta_{k+1}^0-\Delta_{k}^2)^2\frac{\Delta_{k}^{3}}{\Delta_{k+1}^1}\\
  &&-\frac{r}{2}[2\Delta_k^2\Delta_{k+1}^0-(\Delta_k^2)^2][2\frac{G_{k+1}^0}{\Delta_{k+1}^1}-\frac{\Delta_{k}^3}{\Delta_{k+1}^1}],
\end{eqnarray*}
which can be reduced to
\begin{eqnarray}
  &&[(r+2s)G_{k+1}^{-1}+(2r+2s)G_{k}^1]\Delta_{k}^2-(3r+2s)G_{k}^1\Delta_{k+1}^0\nonumber\\
  &=&(r+2s)(\Delta_{k}^2)^2(\frac{\Delta_{k+2}^{-1}}{\Delta_{k+1}^1}+\frac{G_{k+1}^{0}}{\Delta_{k+1}^1})+(r+2s)((\Delta_{k+1}^0)^2-\Delta_{k+1}^0\Delta_{k}^2)\frac{\Delta_{k}^{3}}{\Delta_{k+1}^1}\nonumber\\
  &&+r[2\Delta_{k+1}^0-\Delta_k^2][\frac{\Delta_{k+1}^0\Delta_{k}^3}{\Delta_{k+1}^1}-\frac{\Delta_k^2G_{k+1}^0}{\Delta_{k+1}^1}],\label{formula1}
\end{eqnarray}
by using the fifth relation of Corollary \ref{coro:linear}.

 We recall that
\begin{eqnarray*}
  &&(G_{k+1}^{-1}+G_{k}^1)\Delta_{k}^2-G_{k}^1\Delta_{k+1}^0\\
  &=&(\Delta_{k}^2)^2(\frac{\Delta_{k+2}^{-1}}{\Delta_{k+1}^1}+\frac{G_{k+1}^{0}}{\Delta_{k+1}^1})+((\Delta_{k+1}^0)^2-\Delta_{k+1}^0\Delta_{k}^2)\frac{\Delta_{k}^{3}}{\Delta_{k+1}^1}
\end{eqnarray*}
 holds as is described in Corollary \ref{coro:multiple-identity}. Substituting this equality into \eqref{formula1} and rearranging it, we are left to prove \begin{eqnarray*}
  r[2\Delta_{k+1}^0-\Delta_k^2][G_k^1+\frac{\Delta_{k+1}^0\Delta_{k}^3}{\Delta_{k+1}^1}-\frac{\Delta_k^2G_{k+1}^0}{\Delta_{k+1}^1}]=0.
\end{eqnarray*}
 Notice that
 \[
 G_k^1\Delta_{k+1}^1+\Delta_{k+1}^0\Delta_{k}^3-\Delta_k^2G_{k+1}^0=0
 \]
 is nothing but an identity in Lemma \ref{lemma:bilinear}.
 Thus the proof of \eqref{equ:GNCH-Hamilton-y} is completed.

Next we proceed to the proof of \eqref{equ:GNCH-Hamilton-g}.
Similar to \eqref{equ:GNCH-Hamilton-y}, \eqref{equ:GNCH-Hamilton-g} is equivalent to
\begin{eqnarray*}
  &&2\Delta_{k+1}^0\dot{\Delta}_{k+1}^0\Delta_{k+1}^1\Delta_{k}^1-(\Delta_{k+1}^0)^2(\dot\Delta_{k+1}^1\Delta_{k}^1+\Delta_{k+1}^1\dot\Delta_{k}^1)\\
  &=&(\frac{r}{2}+s)\Delta_{k+1}^1\Delta_{k}^1\Delta_{k+1}^0\Delta_{k}^2(4\sum_{i=k+1}^{N-1}\frac{(\Delta_{i+1}^0)^2}{\Delta_{i+1}^1\Delta_{i}^1}-4\sum_{i=k+1}^{N-1}\frac{\Delta_{i+1}^0\Delta_{i}^2}{\Delta_{i+1}^1\Delta_{i}^1}
    +\sum_{i=k+1}^{N-1}\frac{(\Delta_{i}^2)^2}{\Delta_{i+1}^1\Delta_{i}^1})\\
    &&-(\frac{3r}{2}+s)(\Delta_{k+1}^0)^2\Delta_{k+1}^1\Delta_{k}^1(2-\frac{\Delta_{k}^2}{\Delta_{k+1}^0})\sum_{i=0}^{k-1}\frac{(\Delta_{i}^2)^2}{\Delta_{i+1}^1\Delta_{i}^1}\\
    &&-r\Delta_{k+1}^1\Delta_{k}^1\Delta_{k+1}^0(\Delta_{k+1}^0-\Delta_{k}^2)\sum_{i=0}^{k-1}[2\frac{\Delta_{n-i+1}^0\Delta_{n-i}^2}{\Delta_{n-i+1}^1\Delta_{n-i}^1}-\frac{(\Delta_{n-i}^2)^2}{\Delta_{n-i+1}^1\Delta_{n-i}^1}]\\
    &&+(\frac{r}{2}+s)\Delta_{k+1}^0(\Delta_{k+1}^0-\Delta_{k}^2)[2\Delta_{k+1}^0\Delta_{k}^2-(\Delta_{k}^2)^2]-\frac{r}{2}(\Delta_{k+1}^0)^2[2\Delta_{k+1}^0\Delta_{k}^2-(\Delta_{k}^2)^2],
\end{eqnarray*}
where $N-j$ is replaced by $k$ for simplicity. By using Lemma \ref{lemma:derivative2} and Corollary \ref{coro:sum} and the fifth relation of Corollary \ref{coro:linear}, the above equation can be  written as
\begin{eqnarray}
  &&2\Delta_{k+1}^1\Delta_{k}^1[(r+2s)G_{k+1}^{-1}+(2r+2s)G_k^1)]-\Delta_{k+1}^0\{[(2r+2s)G_{k+1}^0-(r+s)\Delta_{k}^3]\Delta_{k}^1\nonumber\\
  &&+\Delta_{k+1}^1[(2r+2s)G_{k}^0-(r+s)\Delta_{k-1}^3]\}\nonumber\\
  &=&(\frac{r}{2}+s)\Delta_{k}^1\Delta_{k}^2(4\Delta_{k+2}^{-1}+4G_{k+1}^0-\Delta_{k}^{3})-(\frac{3r}{2}+s)\Delta_{k+1}^1\Delta_{k-1}^{3}(2\Delta_{k+1}^0-\Delta_{k}^2)\nonumber\\
    &&-r\Delta_{k+1}^1(\Delta_{k+1}^0-\Delta_{k}^2)[2G_{k}^0-\Delta_{k-1}^3]+(\frac{r}{2}+s)(\Delta_{k+1}^0-\Delta_{k}^2)[2\Delta_{k+1}^0\Delta_{k}^2-(\Delta_{k}^2)^2]\nonumber\\
    &&-\frac{r}{2}\Delta_{k+1}^0[2\Delta_{k+1}^0\Delta_{k}^2-(\Delta_{k}^2)^2].\label{formula2}
\end{eqnarray}

We recall that
\begin{eqnarray*}
&&2\Delta_{k+1}^1\Delta_{k}^1(2G_{k+1}^{-1}+2G_k^1)-\Delta_{k+1}^0[(2G_{k+1}^0-\Delta_{k}^3)\Delta_{k}^1+\Delta_{k+1}^1(2G_{k}^0-\Delta_{k-1}^3)]\\
&=&\Delta_{k}^1\Delta_{k}^2(4\Delta_{k+2}^{-1}+4G_{k+1}^0-\Delta_{k}^{3})-\Delta_{k+1}^1\Delta_{k-1}^{3}(2\Delta_{k+1}^0-\Delta_{k}^2)\\
&&+(\Delta_{k+1}^0-\Delta_{k}^2)[2\Delta_{k+1}^0\Delta_{k}^2-(\Delta_{k}^2)^2],
\end{eqnarray*}
which is the second identity in Corollary \ref{coro:multiple-identity}. Subtracting this equality multiplied by $r+s$ from \eqref{formula2}, we are left to prove
 \begin{eqnarray*}
  &&-2\Delta_{k+1}^1\Delta_{k}^1G_{k+1}^{-1}\\
  &=&\Delta_{k}^1\Delta_{k}^2(-2\Delta_{k+2}^{-1}-2G_{k+1}^0
    +\frac{1}{2}\Delta_{k}^{3})-\frac{1}{2}\Delta_{k+1}^1\Delta_{k-1}^{3}(2\Delta_{k+1}^0-\Delta_{k}^2)\\
    &&-\Delta_{k+1}^1(\Delta_{k+1}^0-\Delta_{k}^2)[2G_{k}^0-\Delta_{k-1}^3]+\frac{1}{2}\Delta_{k}^2[2\Delta_{k+1}^0\Delta_{k}^2-(\Delta_{k}^2)^2]\\
    &&-\Delta_{k+1}^0[2\Delta_{k+1}^0\Delta_{k}^2-(\Delta_{k}^2)^2].
\end{eqnarray*}
By employing identities in Lemma \ref{lemma:bilinear}
\begin{eqnarray*}
  &\Delta_{k+1}^{1}\Delta_{k-1}^{3}&=\Delta_{k}^{1}\Delta_{k}^{3}-(\Delta_{k}^{2})^2,\\
  &\Delta_{k+2}^{-1}\Delta_{k}^2&=\Delta_{k+1}^1G_{k+1}^{-1}-\Delta_{k+1}^0G_{k+1}^0
\end{eqnarray*}
and eliminating $\Delta_k^3$ and $G_{k+1}^{-1}$, what we need to prove is reduced to
\[
2(\Delta_{k+1}^0-\Delta_{k}^2)(G_{k+1}^0\Delta_{k}^1-\Delta_{k+1}^0\Delta_{k}^2-G_{k}^0\Delta_{k+1}^1)=0.
\]
This is valid
because the identity
\[
\Delta_{k+1}^0\Delta_{k}^2=G_{k+1}^0\Delta_{k}^1-G_{k}^0\Delta_{k+1}^1
\]
holds, which appears in Lemma \ref{lemma:bilinear}.

Thus \eqref{equ:GNCH-Hamilton-g} is verified and we complete the proof of Theorem \ref{th:GNCH-solution}.
\end{proof}

We end this section by a Lax pair of the GNCH equation \eqref{equ:GNCH}. It is noted that, unlike the case of Camassa-Holm equation, $z$ in the Lax pair of the GNCH equation is dependent on time $t$.
\begin{theorem}\label{th:GNCH-lax}
The GNCH equation \eqref{equ:GNCH} may be obtained by the compatibility condition of the following system
\begin{eqnarray*}
  &&f_{xx}=zmf+f,\\
  &&f_t=[(r+s)(\frac{1}{z}-u(x,t))+r\int_x^{\infty}u(y,t)dy]f_x+(\frac{r+s}{2}u_x(x,t)+\frac{r}{2}u(x,t))f,
\end{eqnarray*}
where $r,s\in\mathbb{C}$ and $z=z(t)$ is a function satisfying the differential equation
\[
z_t=-r.
\]
\end{theorem}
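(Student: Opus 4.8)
The plan is to verify that the compatibility (cross-derivative) condition $\partial_t f_{xx}=\partial_x^2 f_t$ of the two linear equations reproduces the GNCH equation \eqref{equ:GNCH}. To keep the bookkeeping manageable I would write the spatial equation as $f_{xx}=(zm+1)f$ and abbreviate the temporal equation as $f_t=Pf_x+Qf$, where $P=(r+s)(\tfrac1z-u)+r\int_x^\infty u(y,t)\,dy$ and $Q=\tfrac{r+s}{2}u_x+\tfrac r2 u$. First I would record the two auxiliary facts that close the system: differentiating the spatial equation gives $f_{xxx}=zm_x f+(zm+1)f_x$, and since the $1/z$-term of $P$ is constant in $x$ while $\partial_x\int_x^\infty u\,dy=-u$, one has $P_x=-(r+s)u_x-ru$ and $Q_x=\tfrac{r+s}{2}u_{xx}+\tfrac r2 u_x$.

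Next I would compute both sides. Differentiating $f_{xx}=(zm+1)f$ in $t$ and using $z_t=-r$ gives $\partial_t f_{xx}=(zm+1)Pf_x+[(-rm+zm_t)+(zm+1)Q]f$. Applying $\partial_x^2$ to $f_t=Pf_x+Qf$ and repeatedly eliminating $f_{xx}$ and $f_{xxx}$ through the spatial equation yields $\partial_x^2 f_t=[(P_{xx}+2Q_x)+P(zm+1)]f_x+[(2P_x+Q)(zm+1)+Pzm_x+Q_{xx}]f$. Equating the coefficients of $f_x$ forces $P_{xx}+2Q_x=0$, which holds identically for the $P,Q$ above and merely confirms that the chosen coefficient functions are internally consistent. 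All the content of the theorem therefore resides in the coefficient of $f$, which (after cancelling the common $(zm+1)Q$ term) reduces to
\[
2P_x(zm+1)+Pzm_x+Q_{xx}=-rm+zm_t.
\]

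The key step is to regard this relation as a polynomial identity in the spectral parameter and separate it by powers of $z$, which is legitimate precisely because $z=z(t)$ solves $z_t=-r$ with a free initial value that may be varied. The $z^1$ coefficient, after inserting $P_x$ and the $x$-dependent part of $P$ that multiplies $zm_x$, reads
\[
m_t+2(r+s)u_x m+2rum+(r+s)um_x-rm_x\int_x^\infty u\,dy=0,
\]
which is exactly the expanded form of \eqref{equ:GNCH}. The $z^0$ coefficient reads $-2(r+s)u_x-2ru+(r+s)m_x+\tfrac{r+s}{2}u_{xxx}+\tfrac r2 u_{xx}+rm=0$; substituting the constraint $2m=4u-u_{xx}$ (so $m_x=2u_x-\tfrac12 u_{xxx}$) makes every term cancel in pairs, so this relation holds identically.

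I expect no serious obstacle beyond careful bookkeeping, but one genuinely structural point deserves emphasis and is the crux of the nonisospectrality. The term $-rm$ on the right of the coefficient-of-$f$ relation originates entirely from $z_t m=-rm$, and it is exactly this term that renders the $z^0$ equation consistent with $2m=4u-u_{xx}$ when $r\neq0$. Were $z$ taken constant (the isospectral convention used for Camassa-Holm, $r=0$), the $z^0$ balance would fail by precisely $rm$; thus the nonisospectral evolution $z_t=-r$ is not an extra hypothesis but a forced consequence of demanding that the two linear equations be compatible. I would close by noting that under $s=1$, $r=0$ the pair collapses to the Camassa-Holm Lax pair recalled in Section 2.
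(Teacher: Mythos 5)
Your proposal is correct and follows exactly the route the paper indicates for Theorem \ref{th:GNCH-lax}: differentiate $f_{xx}=zmf+f$ in $t$ and $f_t$ twice in $x$, impose $f_{xxt}=f_{txx}$, and separate the resulting relation by powers of $z$, where the $z^1$ coefficient yields \eqref{equ:GNCH} and the $z^0$ coefficient closes via $2m=4u-u_{xx}$ only because $z_t m=-rm$. The paper gives this as a one-sentence sketch; your computation (including the identity $P_{xx}+2Q_x=0$ and the observation that $z_t=-r$ is forced rather than assumed) is a faithful and complete execution of that same argument.
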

\begin{proof}
  The proof can be achieved by differentiating the first equation with respect to $t$ and the second twice with respect to $x$ and setting $f_{xxt}=f_{txx}$.
\end{proof}

\section{The special cases of GNCH equation \eqref{equ:GNCH}}

In this section we shall present three special cases of GNCH equation \eqref{equ:GNCH}.
%\\
%\ \\
%\textbf{\textit{Case 1:}} r=0
\subsection{A variant of the Camassa-Holm equation}
Taking $r=0$ in the GNCH equation \eqref{equ:GNCH}, we derive a variant of the Camassa-Holm equation.
\begin{equation}\label{equ:CH-1}
  m_t+s(um)_x+smu_x=0, \ \ \ \  2m=4u-u_{xx},
\end{equation}
where $s\in\mathbb{C}$.

Actually, this equation is just the Camassa-Holm equation (under the scale transformation $x\rightarrow x, t\rightarrow st$).
%
%
%As for its N-peakon solution, from Theorem \ref{th:GNCH-solution}, we have
%\begin{corollary}\label{coro:CH1-solution}
%The equation \eqref{equ:CH-1} admits the N-peakon solution having the form of
%\begin{equation*}
%  u(x,t)=\frac{1}{2}\sum_{j=1}^Nm_j(t)\exp(-2|x-x_j(t)|).
%\end{equation*}
%The explicit expressions for $m_j$ and $x_j$ are as follows:
%  \begin{equation*}
%    x_j=\frac{1}{2}\log\left(\frac{1+y_j}{1-y_j}\right), \ \ \ \ m_j=g_j(1-y_j^2)
%  \end{equation*}
%  with
%  \begin{equation*}
%    y_j=1-\frac{\Delta_{N-j}^2}{\Delta_{N-j+1}^0}, \ \ \ \ g_j=\frac{(\Delta_{N-j+1}^0)^2}{\Delta_{N-j+1}^1\Delta_{N-j}^1}.
%  \end{equation*}
%  Here $\Delta_k^l=det(A_{i+j+l}(t))_{i,j=0}^{k-1}$ and the moments $A_k(t)$ are restricted by
%  \begin{eqnarray*}
%      A_k(t)=\sum_{j=0}^{N}(-\lambda_j)^ke^{2sa_j(t)},\ \ \ a_0(t)=\frac{1}{2s}\log\frac{1}{2},\ \ \ \lambda_0=0
%  \end{eqnarray*}
%  with $\dot{a}_j(t)=-\frac{1}{\lambda_j}, j\geq1$ and nonzero real constants $\lambda_j$.
%\end{corollary}

 Consider the case of $r=0,s=1$ (i.e. the Camassa-Holm equation). It is worth noting that any $A_k$ in Theorem \ref{th:GNCH-solution} is a linear combination of $e^{a_j}$, which is a little different from that in Theorem \ref{th:beals}. We recall that, in Theorem \ref{th:beals}, $\lambda_j$ are required to be distinct, $a_j(0)$ are positive, and the determinant $\Delta_{k}^1$ for $1\leq k\leq N$ do not vanish so that  $x_j$ and $m_j$ exist. As for Theorem \ref{th:GNCH-solution}, we only require that $\lambda_j$ are  distinct, and the determinant $\Delta_{k}^1$ for $1\leq k\leq N$ do not vanish because $e^{a_j}$ are always positive.

\subsection{The nonisospectral Camassa-Holm equation}

When $r=1,s=0$, the GNCH equation reduces to
\begin{equation}\label{equ:CH-2}
  m_t+[(u-\int_x^\infty u(y,t)d y)m]_x+m(u_x+u)=0, \ \ \ \  2m=4u-u_{xx}.
\end{equation}
Here we call it the nonisospectral Camassa-Holm equation because its Lax pair is as below.

\begin{corollary}
The nonisospectral Camassa-Holm equation \eqref{equ:CH-2} may be obtained by the compatibility condition of the following system
\begin{eqnarray*}
  &&f_{xx}=zmf+f,\\
  &&f_t=(\frac{1}{z}-u(x,t)+\int_x^{\infty}u(y,t)dy)f_x+\frac{1}{2}(u_x(x,t)+u(x,t))f,
\end{eqnarray*}
where $z=z(t)$ is a function satisfying the differential equation
\[
z_t=-1.
\]
\end{corollary}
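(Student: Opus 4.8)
The plan is to read this corollary as the specialization $r=1$, $s=0$ of Theorem~\ref{th:GNCH-lax}. First I would substitute these values into the general temporal operator of Theorem~\ref{th:GNCH-lax}: the first-order coefficient $(r+s)(\tfrac1z-u)+r\int_x^\infty u\,dy$ collapses to $\tfrac1z-u+\int_x^\infty u\,dy$, the zeroth-order coefficient $\tfrac{r+s}{2}u_x+\tfrac r2u$ collapses to $\tfrac12(u_x+u)$, and the nonisospectral law $z_t=-r$ becomes $z_t=-1$. Since the spatial equation $f_{xx}=zmf+f$ is unchanged, the displayed system is \emph{literally} the $(r,s)=(1,0)$ member of the family in Theorem~\ref{th:GNCH-lax}, and \eqref{equ:CH-2} is the corresponding member of \eqref{equ:GNCH}. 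Hence the corollary follows at once from the theorem; the remaining paragraphs describe the direct compatibility check I would carry out to make the statement self-contained.

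Following the recipe in the proof of Theorem~\ref{th:GNCH-lax}, I write the temporal equation as $f_t=Pf_x+Qf$ with $P=\tfrac1z-u+\int_x^\infty u\,dy$ and $Q=\tfrac12(u_x+u)$. I then differentiate $f_{xx}=(zm+1)f$ in $t$ and $f_t=Pf_x+Qf$ twice in $x$, repeatedly using $f_{xx}=(zm+1)f$ to remove second derivatives of $f$, and impose $f_{xxt}=f_{txx}$. Matching the coefficient of $f_x$ yields the condition $P_{xx}+2Q_x=0$; using $P_x=-u_x-u$ (the antiderivative contributes $-u$) and $Q_x=\tfrac12(u_{xx}+u_x)$, this holds identically, so the $f_x$-equation imposes nothing.

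The substantive content sits in the coefficient of $f$. After cancelling the common $(zm+1)Q$, the condition reduces to $z_tm+zm_t=2P_x(zm+1)+Pzm_x+Q_{xx}$. Here I substitute $z_t=-1$, $P_x=-u_x-u$, and eliminate the higher $x$-derivatives of $u$ through $2m=4u-u_{xx}$ (so $u_{xxx}=4u_x-2m_x$, giving $Q_{xx}=2u_x+2u-m_x-m$). After the cancellations the identity becomes $zm_t=-2zm(u_x+u)+zPm_x-m_x$; dividing by $z$ and rewriting $\tfrac1z-P=u-\int_x^\infty u\,dy$ puts this in the expanded form $m_t+2m(u_x+u)+(u-\int_x^\infty u\,dy)m_x=0$, which is exactly \eqref{equ:CH-2}.

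The only real care lies in the $f$-coefficient bookkeeping: one must use $\frac{d}{dx}\int_x^\infty u\,dy=-u$ with the correct sign throughout, and must retain the nonisospectral term $z_tm=-m$. This last term is what cancels the stray $-m$ produced by $Q_{xx}$; were $z_t=0$ one would be left with an explicit $\tfrac mz$ residue rather than a genuine autonomous PDE, which is the structural reason the spectrum must evolve in time. Everything else is routine elimination via $2m=4u-u_{xx}$, so I expect no genuine obstacle beyond careful sign-tracking of the nonlocal and nonisospectral contributions.
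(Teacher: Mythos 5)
Your proposal is correct and matches the paper's approach: the corollary is precisely the $(r,s)=(1,0)$ specialization of Theorem~\ref{th:GNCH-lax}, whose proof the paper sketches as exactly the compatibility check $f_{xxt}=f_{txx}$ that you carry out in detail. Your intermediate identities ($P_{xx}+2Q_x=0$, $z_tm+zm_t=2P_x(zm+1)+Pzm_x+Q_{xx}$, and the cancellation of the $-m$ from $Q_{xx}$ against $z_tm$) all check out, and the result agrees with the expanded form of \eqref{equ:CH-2}.
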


The following result for its N-peakon solutions holds.

\begin{corollary}\label{coro:CH2-solution}
The nonisospectral Camassa-Holm equation \eqref{equ:CH-2} admits the N-peakon solution having the form of
\begin{equation*}
  u(x,t)=\frac{1}{2}\sum_{j=1}^Nm_j(t)\exp(-2|x-x_j(t)|).
\end{equation*}
The explicit expressions for $m_j$ and $x_j$ are as follows:
  \begin{equation*}
    x_j=\frac{1}{2}\log\left(\frac{1+y_j}{1-y_j}\right), \ \ \ \ m_j=g_j(1-y_j^2)
  \end{equation*}
  with
  \begin{equation*}
    y_j=1-\frac{\Delta_{N-j}^2}{\Delta_{N-j+1}^0}, \ \ \ \ g_j=\frac{(\Delta_{N-j+1}^0)^2}{\Delta_{N-j+1}^1\Delta_{N-j}^1}.
  \end{equation*}
  Here $\Delta_k^l=\det(A_{i+j+l}(t))_{i,j=0}^{k-1}$ and the moments $A_k(t)$ are restricted by
  \begin{eqnarray*}
      A_k(t)=\sum_{j=0}^{N}(-\lambda_j)^ke^{(k+1)a_j(t)},\ \ \ a_0(t)=\log\frac{1}{2},\ \ \ \lambda_0=0
  \end{eqnarray*}
  with $\dot{a}_j(t)=-\frac{1}{\lambda_je^{a_j(t)}}, j\geq1$ and nonzero real constants $\lambda_j$.
\end{corollary}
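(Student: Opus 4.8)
The plan is to obtain Corollary \ref{coro:CH2-solution} as the direct specialization of Theorem \ref{th:GNCH-solution} to the parameter values $r=1$ and $s=0$. First I would verify that the nonisospectral Camassa-Holm equation \eqref{equ:CH-2} is genuinely the $r=1,s=0$ instance of the GNCH equation \eqref{equ:GNCH}: setting $r=1,s=0$ in the coefficient $(s+r)u-r\int_x^\infty u\,dy$ gives $u-\int_x^\infty u\,dy$, and in $(s+r)u_x+ru$ gives $u_x+u$, which is precisely what appears in \eqref{equ:CH-2}. Since the constitutive relation $2m=4u-u_{xx}$ carries no parameters, the two equations coincide exactly.

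Next I would substitute $r=1,s=0$ into each piece of data appearing in Theorem \ref{th:GNCH-solution} and check that it reduces to the expression claimed in the corollary. The peakon ansatz \eqref{sol:GNCH_form} and the formulas \eqref{expression:xm2}, \eqref{expression:gy2} for $x_j,m_j,y_j,g_j$ in terms of the Hankel determinants $\Delta_k^l$ contain no explicit $r$ or $s$, so they transfer verbatim. The only quantities that change are the moment sequence and the spectral evolution: the exponent $r(k+1)+2s$ collapses to $k+1$, so that $A_k(t)=\sum_{j=0}^N(-\lambda_j)^k e^{(k+1)a_j(t)}$; the initial value $a_0(t)=\tfrac{\log\frac12}{r+2s}$ becomes $a_0(t)=\log\frac12$; and the evolution $\dot a_j(t)=-\tfrac{1}{\lambda_j e^{r a_j(t)}}$ becomes $\dot a_j(t)=-\tfrac{1}{\lambda_j e^{a_j(t)}}$. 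These agree term by term with the statement of the corollary.

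Because Theorem \ref{th:GNCH-solution} already establishes, for arbitrary $r,s\in\mathbb{C}$, that the $y_j,g_j$ built from these determinants satisfy the reduced system \eqref{equ:GNCH-Hamilton-y}--\eqref{equ:GNCH-Hamilton-g}, no fresh analysis is needed: the conclusion holds \emph{a fortiori} at the single point $(r,s)=(1,0)$. There is therefore essentially no obstacle to overcome, since every determinant identity of Section 3 and each derivative formula of Lemma \ref{lemma:derivative2} was proved for general parameters. The only point requiring genuine care is the bookkeeping in the specialization of the moments $A_k(t)$, the value $a_0(t)$, and the flow for $a_j(t)$, which I have confirmed above; granting this, the corollary is just a restatement of the theorem under the fixed choice $r=1,\ s=0$.
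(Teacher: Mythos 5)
Your proposal is correct and matches the paper's (implicit) argument: the paper states Corollary \ref{coro:CH2-solution} without a separate proof precisely because it is the immediate specialization of Theorem \ref{th:GNCH-solution} at $r=1$, $s=0$, with the equation, the determinant formulas, and the moment data $A_k(t)$, $a_0(t)=\log\frac{1}{2}$, $\dot a_j(t)=-\frac{1}{\lambda_j e^{a_j(t)}}$ reducing exactly as you checked. Your careful verification of this bookkeeping is all that is required.
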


\subsection{The mixed type Camassa-Holm equation}
Taking $r=4,s=2$ in the GNCH equation \eqref{equ:GNCH}, we have a mixed type Camassa-Holm equation
\begin{equation}\label{equ:CH-3}
  m_t+[(6u-4\int_x^\infty u(y,t)d y)m]_x+m(6u_x+4u)=0, \ \ \ \  2m=4u-u_{xx},
\end{equation}
which also admits the N-peakon solution.
\begin{corollary}\label{coro:CH3-solution}
The mixed type Camassa-Holm equation \eqref{equ:CH-3} admits the N-peakon solution having the form of
\begin{equation*}
  u(x,t)=\frac{1}{2}\sum_{j=1}^Nm_j(t)\exp(-2|x-x_j(t)|).
\end{equation*}
The explicit expressions for $m_j$ and $x_j$ are as follows:
  \begin{equation*}
    x_j=\frac{1}{2}\log\left(\frac{1+y_j}{1-y_j}\right), \ \ \ \ m_j=g_j(1-y_j^2)
  \end{equation*}
  with
  \begin{equation*}
    y_j=1-\frac{\Delta_{N-j}^2}{\Delta_{N-j+1}^0}, \ \ \ \ g_j=\frac{(\Delta_{N-j+1}^0)^2}{\Delta_{N-j+1}^1\Delta_{N-j}^1}.
  \end{equation*}
  Here $\Delta_k^l=\det(A_{i+j+l}(t))_{i,j=0}^{k-1}$ and the moments $A_k(t)$ are restricted by
  \begin{eqnarray*}
      A_k(t)=\sum_{j=0}^{N}(-\lambda_j)^ke^{(4k+8)a_j(t)},\ \ \ a_0(t)=\frac{1}{8}\log\frac{1}{2},\ \ \ \lambda_0=0
  \end{eqnarray*}
  with $\dot{a}_j(t)=-\frac{1}{\lambda_je^{4a_j(t)}}, j\geq1$ and nonzero real constants $\lambda_j$.
\end{corollary}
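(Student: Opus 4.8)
The plan is to recognize that the mixed type Camassa-Holm equation \eqref{equ:CH-3} is nothing but the GNCH equation \eqref{equ:GNCH} evaluated at the particular parameter values $r=4$ and $s=2$, so that the entire statement follows as an immediate specialization of Theorem \ref{th:GNCH-solution}. No independent argument is needed; the work consists solely of verifying that this substitution reproduces \eqref{equ:CH-3} together with all of the auxiliary moment data listed in the corollary.

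First I would substitute $r=4,\ s=2$ into \eqref{equ:GNCH}. Since $s+r=6$ and $r=4$, the equation becomes $m_t+[(6u-4\int_x^\infty u(y,t)\,dy)m]_x+m(6u_x+4u)=0$, which is precisely \eqref{equ:CH-3}. This identifies \eqref{equ:CH-3} as the $r=4,\ s=2$ member of the GNCH family, so Theorem \ref{th:GNCH-solution} applies verbatim and guarantees that \eqref{equ:CH-3} admits the N-peakon solution of the stated form, with $x_j,m_j$ given by \eqref{expression:xm2} and $y_j,g_j$ given by the same Hankel-determinant expressions as in \eqref{expression:gy2}.

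Next I would check that the moment data in the corollary agree with the general formulas upon setting $r=4,\ s=2$. In Theorem \ref{th:GNCH-solution} the exponent is $r(k+1)+2s$, which here equals $4(k+1)+4=4k+8$, reproducing $A_k(t)=\sum_{j=0}^N(-\lambda_j)^k e^{(4k+8)a_j(t)}$. The initial value is $a_0=\frac{\log\frac12}{r+2s}=\frac{\log\frac12}{8}=\frac18\log\frac12$, and the evolution $\dot a_j=-\frac{1}{\lambda_j e^{ra_j}}$ becomes $\dot a_j=-\frac{1}{\lambda_j e^{4a_j}}$. All three items match the statement exactly, which completes the verification.

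Since every ingredient is inherited directly from Theorem \ref{th:GNCH-solution}, there is no genuine obstacle here: the only point requiring any care is the elementary arithmetic of the exponent $r(k+1)+2s$ and the normalization $\frac{\log\frac12}{r+2s}$, together with confirming that the standing hypotheses $\lambda_0=0$, the distinctness of the $\lambda_j$, and $\Delta_k^1\neq0$ for $1\le k\le N$ all carry over unchanged from the general theorem. Once these are confirmed the corollary is established, exactly as in the treatment of the earlier special cases \eqref{equ:CH-1} and \eqref{equ:CH-2}.
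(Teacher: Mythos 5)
Your proposal is correct and coincides with the paper's own (implicit) treatment: the corollary is stated there precisely as the $r=4$, $s=2$ specialization of Theorem \ref{th:GNCH-solution}, and your arithmetic checks ($s+r=6$, $r(k+1)+2s=4k+8$, $a_0=\frac{\log\frac{1}{2}}{r+2s}=\frac{1}{8}\log\frac{1}{2}$, $\dot{a}_j=-\frac{1}{\lambda_je^{4a_j}}$) are exactly what the specialization requires.
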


\begin{corollary}
The mixed type Camassa-Holm equation \eqref{equ:CH-3} may be obtained by the compatibility condition of the following system
\begin{eqnarray*}
  &&f_{xx}=zmf+f,\\
  &&f_t=(\frac{6}{z}-6u(x,t)+4\int_x^{\infty}u(y,t)dy)f_x+(3u_x(x,t)+2u(x,t))f,
\end{eqnarray*}
where $z=z(t)$ is a function satisfying the differential equation
\[
z_t=-4.
\]
\end{corollary}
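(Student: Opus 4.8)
The fastest route is to recognize that equation \eqref{equ:CH-3} is exactly the GNCH equation \eqref{equ:GNCH} evaluated at $r=4,\ s=2$, as noted at the start of Section 5.3. Substituting these values into the Lax pair of Theorem \ref{th:GNCH-lax} gives $(r+s)(\tfrac{1}{z}-u)=\tfrac{6}{z}-6u$, $r\int_x^\infty u\,dy=4\int_x^\infty u\,dy$, $\tfrac{r+s}{2}u_x=3u_x$, $\tfrac{r}{2}u=2u$, and $z_t=-r=-4$, which reproduces the system in the statement verbatim. Hence the corollary is an immediate specialization of Theorem \ref{th:GNCH-lax}, and nothing beyond that substitution is logically required.

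For a self-contained verification (which is also the computation underlying Theorem \ref{th:GNCH-lax}), I would impose the compatibility condition $f_{xxt}=f_{txx}$ directly. Writing the time evolution as $f_t=Af_x+Bf$ with $A=\tfrac{6}{z}-6u+4\int_x^\infty u\,dy$ and $B=3u_x+2u$, I would use the spectral equation $f_{xx}=(zm+1)f$ to eliminate every second- and third-order $x$-derivative, so that $f_{txx}$ and $f_{xxt}$ each reduce to an expression of the form $(\cdots)f+(\cdots)f_x$. Equating the two sides yields two scalar identities: the coefficient of $f_x$ gives $A_{xx}+2B_x=0$, and the coefficient of $f$ gives a relation of the form $z_t m+zm_t=2A_x(zm+1)+Azm_x+B_{xx}$.

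The first identity holds identically: since $\partial_x\int_x^\infty u\,dy=-u$, one computes $A_{xx}=-6u_{xx}-4u_x$ and $2B_x=6u_{xx}+4u_x$, whose sum vanishes. For the second, I would insert $z_t=-4$ and use $2m=4u-u_{xx}$ to replace $u_{xx},u_{xxx}$ by $m,m_x$; the purely local contributions then collapse to $-6m_x$, so after dividing by $z$ the relation becomes $m_t=2A_xm+(A-\tfrac{6}{z})m_x$. Writing $A-\tfrac{6}{z}=-(6u-4\int_x^\infty u\,dy)$ and $A_x=-(6u_x+4u)$ identifies this with $m_t+[(6u-4\int_x^\infty u\,dy)m]_x+m(6u_x+4u)=0$, i.e. equation \eqref{equ:CH-3}.

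The only delicate points are bookkeeping ones, and they are where the main difficulty lies: the nonlocal term must be differentiated through $\partial_x\int_x^\infty u\,dy=-u$, and the time-dependence of the spectral parameter must be retained throughout, since the term $z_t m$ appearing in $f_{xxt}$ is precisely what forces $z_t=-4$ and renders the problem nonisospectral. Dropping either contribution would break the reduction to \eqref{equ:CH-3}.
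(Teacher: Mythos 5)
Your proposal is correct and follows exactly the paper's route: the corollary is stated as an immediate specialization of Theorem \ref{th:GNCH-lax} at $r=4,\ s=2$, and the paper's own (one-line) proof of that theorem is precisely the compatibility computation $f_{xxt}=f_{txx}$ that you carry out. Your explicit verification of the two scalar identities $A_{xx}+2B_x=0$ and $z_tm+zm_t=2A_x(zm+1)+Azm_x+B_{xx}$, including the role of $\partial_x\int_x^\infty u\,dy=-u$ and of $z_t=-4$, simply fills in the details the paper leaves to the reader, and it checks out.
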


At the end of this section, we illustrate the 1-peakon solution and 2-peakon solution to Eq. \eqref{equ:CH-3}, respectively.
\begin{enumerate}
  \item \textit{1-peakon.}

  Take $\lambda_1=1$ and $a_1(0)=0$.

  From Corollary \ref{coro:CH3-solution}, we easily obtain
  \[x_1=\frac{1}{2}\log[2(4t-1)^2],\ \ \ \ m_1=\frac{2}{4t-1}\]
  so that \eqref{equ:CH-3} admits the 1-peakon solution
  \[
  u(x,t)=\frac{1}{4t-1}\exp(-2|x-\frac{1}{2}\log2-\frac{1}{2}\log(4t-1)^2|).
  \]
  From this expression, we see that $t=\frac{1}{4}$ is a turning point of the 1-peakon solution. That is, the amplitude $m_1$ satisfys $m_1<0$ for $t<\frac{1}{4}$ and $m_1>0$ for $t>\frac{1}{4}$, which means the solution is from antipeakon to peakon along with $t$. A simple simulation (see Fig.1) using Matlab is given below by computing the explicit formula.

  \begin{figure}[!htp]
  \centering
  \resizebox{0.75\textwidth}{!}{
  \includegraphics{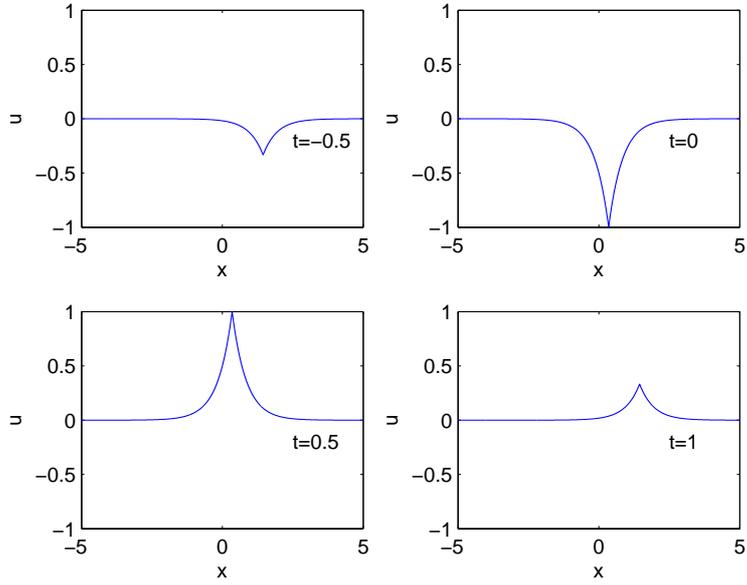}
   }
   \caption{1-peakon solution for Eq. \eqref{equ:CH-3} at time $t=-0.5,\ 0,\ 0.5,\ 1$ with the case of $\lambda_1=1,\ a_1(0)=0$}
   \label{fig:1}
\end{figure}

  \item \textit{2-peakon.}

  Take $\lambda_1=1$, $\lambda_2=-1$ and $a_1(0)=a_2(0)=0$.

  We get the 2-peakon solution (see Fig.2)
  \[
  u(x,t)=\frac{1}{2}\sum_{j=1}^2m_j(t)\exp(-2|x-x_j(t)|)
  \]
  with
  \begin{eqnarray*}
    &&x_1=\frac{1}{2}\log\frac{8(4t-1)^2(4t+1)^2}{(4t-1)^4+(4t+1)^4},\\
    &&x_2=\frac{1}{2}\log[2(4t-1)^2+2(4t+1)^2],\\
    &&m_1=\frac{2[(4t-1)^4+(4t+1)^4]}{(4t+1)(4t-1)[(4t+1)^3+(4t-1)^3]},\\
    &&m_2=\frac{2[(4t-1)^2+(4t+1)^2]}{(4t+1)^3+(4t-1)^3}
  \end{eqnarray*}
  from Corollary \ref{coro:CH3-solution}.
\begin{figure}[!htp]
  \centering
  \resizebox{0.75\textwidth}{!}{
  \includegraphics{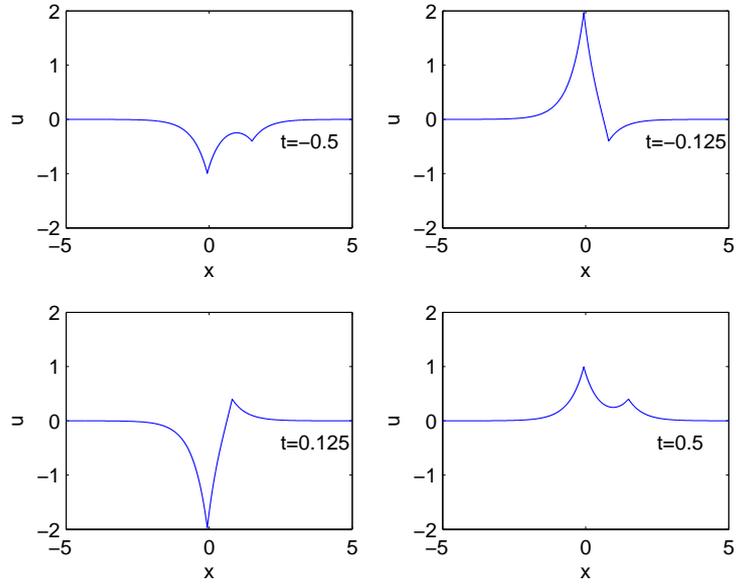}
   }
   \caption{2-peakon solution for Eq. \eqref{equ:CH-3} at time $t=-0.5,\ -0.125,\ 0.125,\ 0.5$ with the case of $\lambda_1=1,\lambda_2=-1,a_1(0)=a_2(0)=0$}
   \label{fig:2}
\end{figure}

  It is easy to work out the turning points $t=-\frac{1}{4}, 0, \frac{1}{4}$ and there hold for the amplitudes
  \begin{eqnarray*}
    &&m_1<0,m_2<0,\ \ \ \text{when}\ \ t<-\frac{1}{4},\\
    &&m_1>0,m_2<0,\ \ \ \text{when}\ \ -\frac{1}{4}<t<0,\\
    &&m_1<0,m_2>0,\ \ \ \text{when}\ \ 0<t<\frac{1}{4},\\
    &&m_1>0,m_2>0,\ \ \ \text{when}\ \ t>\frac{1}{4},
  \end{eqnarray*}
  which interpret the performance of Fig.2.
\end{enumerate}

\section{Conclusion and Discussions}
We have derived one generalized nonisospectral Camassa-Holm equation, which admits N-peakon solutions taking as similar explicit formulae as the Camassa-Holm equation. Some special cases of this system are also studied. The approach is mainly by use of the determinant technique.

After we derived our result, we accidentally found that Est\'{e}vez et al.
have investigated a nonisospectral 2+1 Camassa-Holm hierarchy in \cite{estevez2011non}, where 1+1 hierarchies are derived by using Lie symmetries reduction. The proposed equation \eqref{equ:GNCH} in this paper is different from those. And what we concern about is the N-peakon solutions.

Actually, we investigated  one kind of nonisospectral Camassa-Holm equation with N-peakon solutions. When $r,s$ in Eq. \eqref{equ:GNCH} are dependent on time $t$ instead of constants, the corresponding equation is still integrable. In other words, if we consider the more generalized determined system
\begin{eqnarray*}
  &&f_{xx}=zmf+f,\\
  &&f_t=[(r+s)(\frac{1}{z}-u(x,t))+r\int_x^{\infty}u(y,t)dy]f_x+(\frac{r+s}{2}u_x(x,t)+\frac{r}{2}u(x,t))f,
\end{eqnarray*}
where $r=r(t)$, $s=s(t)$ are two given functions dependent on time $t$ and $z=z(t)$ is a function satisfying the differential equation
\[
z_t=-r(t),
\]
then the compatibility condition leads to an equation with variable coefficients
\begin{equation*}
  m_t+[((s+r)u-r\int_x^\infty u(y,t)d y)m]_x+m[(s+r)u_x+ru]=0, \ \ \ \  2m=4u-u_{xx},
\end{equation*}
where $r=r(t)$, $s=s(t)$ are two given functions of $t$. However, we didn't derive its N-peakon solutions except the special cases with $r=0$ (the Camassa-Holm equation with variable coefficients) and $s=0$ (the nonisospectral Camassa-Holm equation with variable coefficients). It still needs further studies.

When we consider the determined system
\begin{eqnarray*}
  L_0(z)f(x,t)=0,&\ \ \ \ &L_0(z)=\partial_x^2-zm(x,t)-1,\\
  \frac{\partial f(x,t)}{\partial t}=A_0(z)f,&\ \ \ \ & A_0(z)=(\frac{1}{z}-u(x,t))\partial_x+\frac{1}{2}u_x(x,t),
\end{eqnarray*}
where $z=z(t)$ satisfy the differential equation
\[
z_t=r(t)z\]
with a given function $r(t)$. The compatibility condition yields another nonisospectral Camassa-Holm equation
\[
 m_t+(um)_x+m(u_x+r)=0, \ \ \ \  2m=4u-u_{xx}.
\]
However, this equation is not novel because it can be transformed into the Camassa-Holm equation by the variable transformation
\begin{eqnarray*}
&&X=x, \quad \qquad \qquad \qquad \qquad \qquad  T=\int^t e^{-\int^zr(y)dy}dz,\\
&&M(X,T)=m(x,t)e^{\int^tr(y)dy}, \ \ \ \ \ U(X,T)=u(x,t)e^{\int^tr(y)dy}.
\end{eqnarray*}

Therefore, it is natural to ask whether there exist more novel nonisospectral Camassa-Holm equations. Do they admit multipeakon solutions? We shall consider these problems in the future.

%It is nature to ask that whether one can derive more equations with N-peakon solutions? We shall consider the corresponding problems in the future.

\section*{Acknowledgements}
This work was partially supported by the National Natural Science Foundation of China (Grant no. 11331008, 11371251), the Knowledge Innovation Program of LSEC, ICMSEC, Academy of Mathematics and Systems Science, Chinese Academy of Sciences.

%\section*{References}
\small
%\bibliographystyle{plain}
%\bibliography{peakon201401,OPandIS201401,hankelreference201401,extrapolation201401}

\end{document}